\newcommand{\poly}{\operatorname{poly}}
\newcommand{\Tr}{{\rm Tr}}
\newcommand{\spf}{\mathfrak{sp}(M,\mathbb{R})}
\newcommand{\s}{\vec{\sigma}}
\newcommand{\z}{\hat{\vec{z}}}
\newcommand{\ra}{\rangle}
\newcommand{\la}{\langle}
\renewcommand{\vec}[1]{\boldsymbol{#1}}
\newcommand{\id}{\openone}
\newcommand{\ad}{^{\dagger}}
\newcommand{\ketbra}[2]{|#1\rangle\!\langle #2|}
\newcommand{\FC}{\mathcal{F}}
\newcommand{\HC}{\mathcal{H}}
\newcommand{\OC}{\mathcal{O}}
\newtheorem{theorem}{Theorem}
\newtheorem{problem}{Problem}
\newtheorem{lemma}{Lemma}
\newtheorem{definition}{Definition}
\newtheorem{supproposition}{Supplemental Proposition}
\newcommand{\mbb}[1]{\mathbb{#1}}
\def\opq{\hat{q}}
\def\opp{\hat{p}}
\def\opa{\hat{a}}
\def\opb{\hat{a}^{\dagger}}
\begin{document}

\title{Gate-based quantum simulation of Gaussian bosonic circuits \\ on exponentially many modes}

\author{Alice Barthe}
\affiliation{CERN, Meyrin, Geneva 1211, Switzerland}
\affiliation{Theoretical Division, Los Alamos National Laboratory, Los Alamos, New Mexico 87545, USA}
\affiliation{Instituut-Lorentz, Universiteit Leiden, Leiden, 2333CA, the Netherlands}

\author{M. Cerezo}
\thanks{cerezo@lanl.gov}
\affiliation{Information Sciences, Los Alamos National Laboratory, Los Alamos, New Mexico 87545, USA}
\affiliation{Quantum Science Center, Oak Ridge, TN 37931, USA}

\author{Andrew T. Sornborger}
\affiliation{Information Sciences, Los Alamos National Laboratory, Los Alamos, New Mexico 87545, USA}

\author{Mart\'in Larocca}
\affiliation{Theoretical Division, Los Alamos National Laboratory, Los Alamos, New Mexico 87545, USA}
\affiliation{Center for Non-Linear Studies, Los Alamos National Laboratory, Los Alamos, New Mexico 87545, USA}

\author{Diego Garc\'ia-Mart\'in}
\affiliation{Information Sciences, Los Alamos National Laboratory, Los Alamos, New Mexico 87545, USA}

\begin{abstract}
We introduce a framework for simulating, on an $(n+1)$-qubit quantum computer, the action of a Gaussian Bosonic (GB) circuit on a state over $2^n$ modes. Specifically, we encode the initial bosonic state's expectation values over quadrature operators (and their covariance matrix) as an input qubit-state. This is then evolved by a quantum circuit that effectively implements the symplectic propagators induced by the GB gates. We find families of GB circuits and initial states leading to efficient quantum simulations. For this purpose, we introduce a dictionary that maps between GB and qubit gates such that particle- (non-particle-) preserving  GB gates lead to real (imaginary) time evolutions at the qubit level. For the special case of particle-preserving circuits, we present a BQP-complete GB decision problem, indicating that GB evolutions of Gaussian states on exponentially many modes are as powerful as universal quantum computers. We also perform numerical simulations of an interferometer on $\sim8$ billion modes, illustrating the power of our framework. 
\end{abstract}

\maketitle

\textbf{Introduction.} The study of the power of quantum computers has been a central and fundamental topic in complexity theory. While it is known that these devices can solve problems at least as fast as classical probabilistic computers, we also know that they are unable to do so more than exponentially faster~\cite{bernstein1997quantum,aaronson2010bqp}. This has raised the question: \textit{What are the tasks that saturate this separation? I.e., what are the problems for which a quantum computer can achieve an exponential advantage over its classical counterparts?}      

To answer such questions one starts with the class Bounded-Error Quantum Polynomial (BQP),  the set of decision problems that a quantum computer can solve in polynomial time with a small constant probability of failure. Then, one determines the subset of problems that are the hardest therein, known as BQP-complete. Several BQP-complete problems are known, such as those based on the Harrow–Hassidim–Lloyd algorithm~\cite{harrow2009quantum},  scattering in scalar quantum field theory~\cite{jordan2018bqp}, and more recently on the quantum simulation of exponentially many coupled classical oscillators~\cite{babbush2023exponential}. The latter presents the intriguing perspective that simulating exponentially large linear and energy-preserving simple classical systems leads to BQP-completeness, and, under reasonable complexity theory assumptions, to an exponential quantum advantage over classical methods. 

In this work we prove an analogous result to that in Ref.~\cite{babbush2023exponential}, namely, that the quantum simulation of particle-preserving GB circuits (i.e. passive linear optics)~\cite{knill2001scheme,bouland2014generation,sawicki2015universality} acting on Gaussian initial states on exponentially many modes also leads to BQP-completeness (see Fig.~\ref{fig:schematic}). At its core, our framework starts with the realization that a direct simulation of a bosonic system is intractable on a qubit-based quantum computer, as the associated Hilbert spaces are fundamentally different (with one being infinite-dimensional, and the other discrete). This issue can be avoided by restricting the simulation to the first and second moments of the quadrature operators. That is, we encode in a quantum state the position and momentum expectation values (and their covariance matrix) over the initial bosonic state. As such, instead of simulating the action of the GB circuit on the bosonic Hilbert space, we implement its effective action on the expectation values on a gate-based quantum computer. 

In this context, we present a constructive dictionary that translates back-and-forth between the symplectic propagator associated with a universal set of GB gates (beamsplitters, phase and squeezing gates) and qubit circuits. The efficiency of our simulation framework relies on several key conditions, such as the input qubit-state being preparable in polynomial time, and the quantum circuit requiring only polynomially-many gates. Indeed, we present cases of interest for which these two conditions are satisfied, and therefore for which we can achieve an exponential quantum advantage.

\textbf{Background}. In what follows, we will consider systems composed of $M$ bosonic modes (with $M=2^n$). Let  $\hat{a}^\dagger_m$ and $\hat{a}_m$, with $m=1,\ldots,M$, respectively denote the creation and annihilation operators for the $m$-th mode~\cite{braunstein2005quantum}. We consider the standard Hermitian \textit{quadrature operators}, position $\opq_m=\frac{1}{\sqrt{2}}(\hat{a}_m+\hat{a}_m\ad)$ and momentum $\opp_m=\frac{i}{\sqrt{2}}(\hat{a}_m\ad-\hat{a}_m)$. They satisfy the canonical commutation relations $[\opq_m,\opp_{m'}]=i\delta_{mm'}$. Furthermore, we will focus on the case where an $M$-mode bosonic state $\rho_0$  evolves under the action of a GB circuit whose gates are generated by time-independent Hamiltonians that are quadratic in the position and momentum operators\footnote{A generalization to time-dependent Hamiltonians is direct using standard Hamiltonian-simulation techniques~\cite{poulin2011quantum}.}. 

These GB generators, also known as free-bosonic generators, are arbitrary real-valued degree-two homogeneous polynomials on the quadrature operators 
\begin{equation}
    \hat{H} = \frac{1}{2} \z^T K \z\,, \text{ with } \z=(\opq_1,\dots, \opq_M, \opp_1,\dots, \opp_M)^T\,,
\end{equation} 
where $K$ is a real $2M\times 2M$ symmetric matrix. The vector $\z$ allows us to express the commutation relations in the compact form $[\z_\alpha,\z_\beta]=i\Omega_{\alpha\beta}$, where $\Omega= iY\otimes \id_M$. Here, $Y$ is the usual $2\times 2$ Pauli matrix and $\id_M$ the $M\times M$ identity matrix.

\begin{figure}[t]
    \centering
    \includegraphics[width=1\columnwidth]{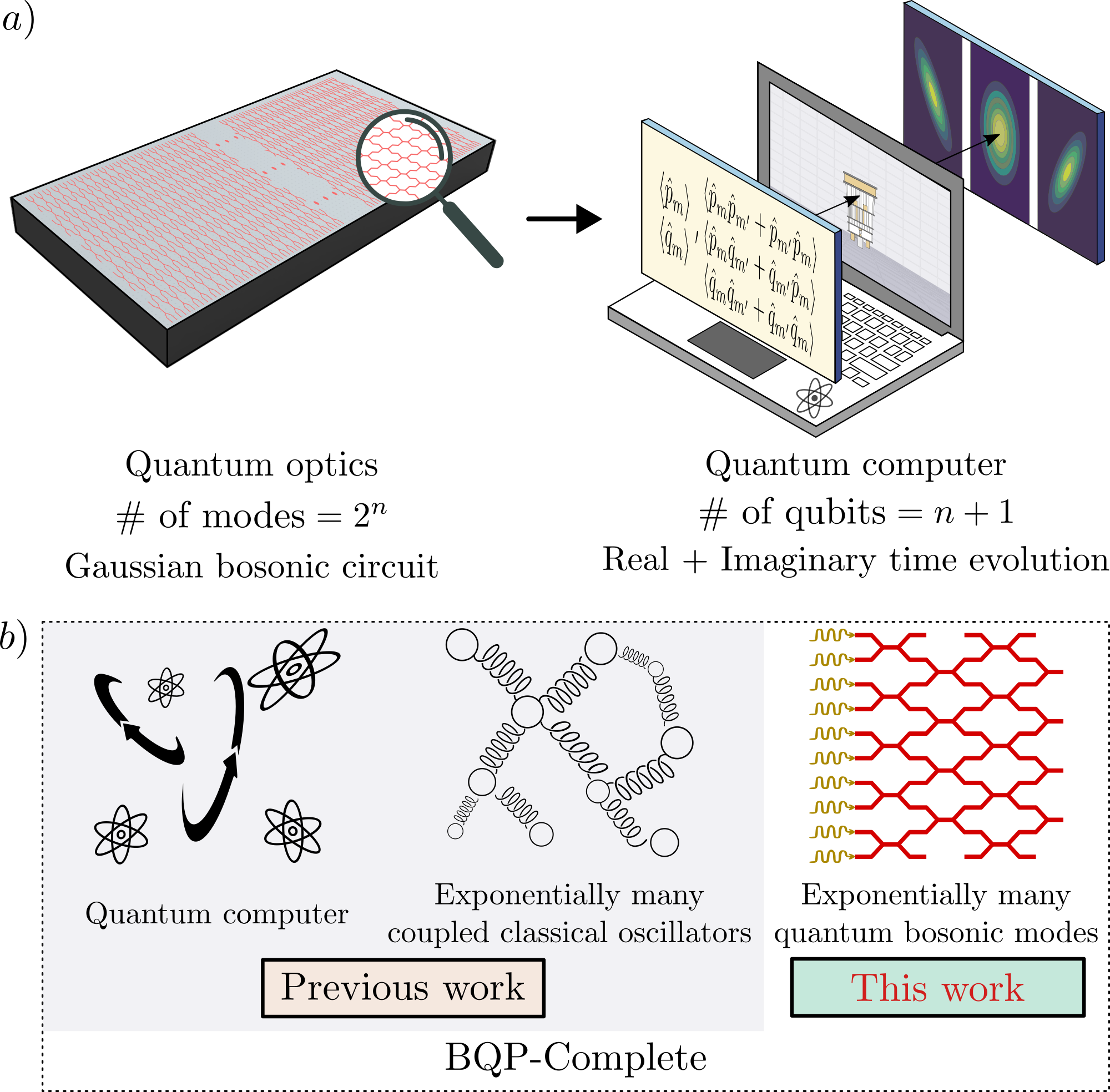}
    \caption{\textbf{Schematic representation of our main results.} a) We present a framework for simulating the action of a GB circuit on the first and second moments of quadrature operators of a bosonic state on $2^n$ modes on an $(n+1)$-qubit gate-based quantum computer. b) We show that particle-preserving GB evolutions on Gaussian bosonic states are sufficient to define a problem that is BQP-complete, thus indicating that passive linear optics on exponentially many bosonic modes are as powerful as universal quantum computers.   }
    \label{fig:schematic}
\end{figure}

Next, we collect the expectation values of the position and momentum operators over $\rho_0$ in a vector $\la \z\ra =(\la \opq_1\ra,\dots,\la \opq_M\ra,\la \opp_1\ra,\dots,\la \opp_M\ra)^T \in \mbb{R}^{2M}$, with $\langle \hat{x}\rangle=\Tr[\rho_0 \hat{x}]$.   As shown in the  Supplemental Information (SI),  evolving  $\rho_0$ with a unitary generated by a GB generator $\hat{H}$ for a time $t$ induces the evolution of $\la \z\ra$ as
\begin{equation}
     \frac{\partial \langle \z \rangle}{\partial t} =  \Omega K \la\z\ra \,,\,\text{ so that }\, \la\z\ra(t) = e^{t \Omega K} \la\z\ra(0)\,.
    \label{eq:exp-val-motion}
\end{equation} 
Here,  $\la\z\ra(t)$ denotes the vector containing the expectation values of positions and momenta at time $t$, and thus $\la\z\ra(0)$ represents the initial condition. Since the canonical commutation relations must be preserved, the propagator $e^{t \Omega K}$ is a $2M\times 2M$ symplectic matrix with real entries belonging to the Lie group $\mathbb{SP}(M,\mathbb{R})$~\cite{braunstein2005quantum}, which in turn implies that $\Omega K$ is an operator in the symplectic Lie algebra $\mathfrak{sp}(M,\mathbb{R})$.

Note that while, in general, $e^{t \Omega K}$ is not unitary, we characterize the quadratic Hamiltonians leading to unitary evolutions of the vector $\la \z\ra$, when a gate generator is of the form $\hat{H} = \sum_{m,m'=1}^{M} h_{m m'} \,\hat{a}_m^\dagger \hat{a}_{m'} +\frac{\Tr[h]}{2}\id_{2M}$, 
with $h$ a Hermitian matrix. Then $[\Omega, K]=0$, and the propagator $e^{t\Omega K}$ is the real-time evolution of the Hermitian $i \Omega K$. Hamiltonians of this form are known as \textit{particle preserving} since the hopping terms $\hat{a}_m^\dagger \hat{a}_{m'}$  move a boson from mode $m'$ to mode $m$. 
 We also characterize \textit{non-particle-preserving Hamiltonians} of the form $\hat{H} = \sum_{m,m'=1}^{M} \Delta_{m m'}^\dagger \,\hat{a}_m \hat{a}_{m'} +\text{h.c.}$, whose propagator corresponds to the imaginary-time evolution of  $\la\z\ra$ under the effective Hamiltonian $-\Omega K$. 

In addition to $\la\z\ra$, we also collect the expectation value of products of quadrature operators over $\rho_0$ in the $2M\times 2M$ positive-definite covariance matrix $\s$ whose entries are given by $\s_{\alpha \beta}=\frac{1}{2}\langle \z_\alpha \z_\beta+ \z_\beta\z_\alpha\rangle -\langle \z_\alpha\rangle\langle \z_\beta\rangle$~\cite{adesso2014continuous}. 
Analogously to Eq.~\eqref{eq:exp-val-motion}, we find  (see the SI) that  
\begin{equation}
   \frac{\partial \s}{ \partial t} = \Omega K \s-\s K\Omega\,,\,\text{ so that }\, \s(t) = e^{\Omega K t}\s(0)  e^{\mp\Omega K t}\,, \nonumber
\end{equation}
where the $- \,(+)$ sign corresponds to particle (non-particle) preserving Hamiltonians, as defined above. 

The previous insights pave the way to simulate the action of GB circuits on a gate-based quantum computer.

\begin{figure*}[t]
    \centering
\includegraphics[width=1\linewidth]{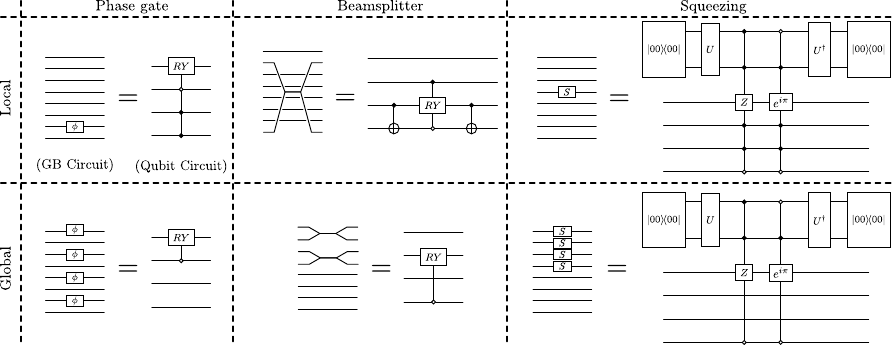}
    \caption{\textbf{Examples of GB gates in the qubit picture.} We consider a bosonic system on $M=8$ modes, leading to a circuit on $4$ qubits. The local phase gate acts on the mode $m=6=2^0\times0+2^1\times1+2^2\times1$.  The local beamsplitter acts on the modes $m=1$ and $m'=7$, whose binary representations only share the least significant bit. The local squeezing gate acts on mode $m=3=2^0\times1+2^1\times1+2^2\times0$, and is represented by an imaginary time evolution as a linear combination of unitaries with post-selection on two ancillary qubits (which have been added on top). The global phase gate acts on all modes whose index is even. 
    The global beamsplitter is applied to the first half of the modes, pairing each mode with even index $m$ to its nearest neighbor mode with index $m'=m+1$.  The global squeezing gate is applied to the first half of the modes.}
\label{fig:gates}
\end{figure*}

\textbf{Initialization}.
To start the simulation on the quantum computer, we encode the normalized vector $\la \z\ra$ (normalized matrix $\s$) into a pure (mixed) $(n+1)$-qubit quantum state $\ket{\z}\propto \la \z\ra$ ($\varrho_{\s}\propto \s$). For instance, we have  
\begin{equation}
    \ket{\z} = \frac{1}{{\lVert\la \z\ra\rVert}_2}\sum_{m=1}^{2^n} \la \hat{q}_m\ra \ket{0}\otimes\ket{m} + \la \hat{p}_m\ra \ket{1}\otimes\ket{m}\,.\label{eq:encoding}
\end{equation}
For our scheme to be efficient, such states need to be preparable in polynomial time. Such cases occur, e.g., when there are $\OC(\poly(n))$ non-zero known position and momentum expectation values. Equation~\eqref{eq:encoding} illustrates the privileged role of the first qubit~\cite{garcia2024architectures}, separating the Hilbert space into two subspaces, one associated with the positions and the other with the momenta. We will henceforth refer to the first qubit as the \textit{symplectic qubit}. The remaining $n$ qubits serve as a register for each of the $2^n$ modes, and we will refer to them as \textit{register qubits}. In particular, in Eq.~\eqref{eq:encoding} each mode label $m$ is encoded via its binary decomposition as $m=2^0 m_1 + 2^1 m_2+\ldots+2^{n-1} m_{n}$, with $m_l\in\{0,1\}$, and with the least significant qubit being the top-most register qubit. 

Here we note that when $\rho_0$  is Gaussian, then it is fully characterized by the first and second moments of the quadrature operators, meaning that $\ket{\z}$ and $\varrho_{\s}$ provide a full description of these states. Moreover, if the initial state is also coherent (eigenstates of $\opa$), then $\varrho_{\s}$ is the maximally mixed state on $(n+1)$ qubits. For other non-Gaussian states our framework is restricted to the information contained in $\z$ and $\s$.

\textbf{Evolution}.
Once the initial state is prepared, we can evolve it with a quantum circuit that effectively implements the symplectic propagators associated with the gates in the GB circuit (see for instance Eq.~\eqref{eq:exp-val-motion}). In particular, we present a dictionary to efficiently translate between standard GB and qubit gates. We also refer the reader to Fig.~\ref{fig:gates} for an explicit circuit depiction of some GB gates in qubit circuit form.

The \textit{phase gate} is a particle-preserving gate acting on a single mode $m$. In terms of bosonic operators, its generator is $\hat{H}=\opp_m^2 + \opq_m^2$, yielding $\Omega K= 2iY \otimes \ketbra{m}{m}$ in the qubit picture. This corresponds to an $R_y$  gate (rotation about the $y$-axis) on the symplectic qubit, conditioned on the $\ketbra{m}{m}$ state on the register.

The \textit{beamsplitter} is a particle-preserving gate acting on two modes $m$ and $m'$. It is generated by $\hat{H}=\opq_m\opp_{m'} - \opq_{m'}\opp_{m}$  (with $m\neq m'$), which results in $\Omega K= 2i\id \otimes (i\ketbra{m}{m'} - i\ketbra{m'}{m})$. This corresponds to an $R_y$ rotation in the subspace spanned by $\ket{m}$ and $\ket{m'}$. This gate can be implemented by using controlled-not gates to transform $\ket{m}$ and $\ket{m'}$ into two computational-basis states that only differ in one qubit, then performing a controlled-$R_y$ rotation on that qubit (conditioned on the other $n-1$ qubits in the register), and applying the same controlled-not gates to return to the original basis. We highlight the fact that this gate acts trivially on the symplectic qubit, as beamsplitters conserve total momentum and total position.

The \textit{squeezing gate} is a non-particle-preserving gate that acts on a single mode $m$. Its generator is $\hat{H}=\pm \left(\opq_m\opp_m + \opp_m\opq_m\right)$, leading to $\Omega K= \pm  2 X \otimes \ketbra{m}{m}$. This produces an imaginary-time evolution~ \cite{motta2020determining,zoufal2021variational,zoufal2021error,silva2023fragmented}, which can be implemented, e.g., with two ancillary qubits and post-selection. As illustrated in Fig.~\ref{fig:gates} and further discussed in the SI, we propose to implement it as a linear combination of unitaries~\cite{childs2012hamiltonian}, with the latter being multi- controlled-Z and controlled-phase gates.  Notably, the fact that states cannot be arbitrarily squeezed translates in our framework as the success probability of the imaginary-time evolution for infinite time being zero.  

Finally, the \textit{displacement gate}, a common non-particle-preserving Gaussian gate, cannot be included as a linear qubit gate in the proposed framework (see the SI for an additional discussion on this gate).

We stress that while our framework requires the implementation of multi-controlled qubit operations to simulate the action of some GB gates, these can be compiled exactly using only $\OC(n)$ local gates~\cite{nielsen2000quantum}. Moreover, if instead of one or two modes, we consider GB gates that act on several (potentially exponentially many) modes, there can be simplifications that render them much easier to implement at the qubit level. As an important example, we show in Fig.~\ref{fig:gates} (see also SI) how some global phase gates and beamsplitters acting on $2^{n-1}$ modes simplify to two-qubit controlled-$R_y$ rotations, as well as how the number of control qubits is reduced for some global squeezing gates. This means that when the GB circuit is composed of polynomially many such local particle-preserving gates, the implementation of the quantum circuit is efficient. When non-particle-preserving squeezing gates are added, then the efficiency will ultimately depend on how many such gates are added, as well as on their squeezing strength parameters.

At this point, we recall that the covariance matrix of coherent states leads to a maximally mixed state $\varrho_{\s}$, and therefore it remains invariant when evolved with a purely unitary circuit (e.g., particle-preserving GB gates in an interferometer). This result is well aligned with the bosonic picture where coherent states going through interferometers remain coherent states. Then, as squeezing gates are not particle-preserving, their action on a coherent state corresponds to purifying the associated $\varrho_{\s}$.

\textbf{Measurements}. At the output of the quantum circuit, we obtain states that represent the evolved expectation values and covariance matrix of quadrature operators (which we respectively denote as $\ket{\z'}$ and $\varrho_{\s'}$). We now discuss how measuring these states allows us to extract useful information about the GB circuit. 

There are a variety of measurements of interest for Gaussian states. As detailed in the SI,  photon counting can be implemented for coherent states by sampling bitstrings from  $\ket{\z'}$. To understand why this is the case, we recall that the energy for each mode is proportional to the probability of sampling the corresponding bitstring.
Secondly, for any bosonic state homodyne measurements correspond to retrieving the position (momentum) of a specific mode. At the qubit level, these measurements can be implemented from a swap-Hadamard test between $\ket{\z'}$ and some computational-basis state of interest. Then, our framework also allows us to estimate the fraction of total momentum and total position, as this value can be retrieved by measuring the symplectic qubit in $\ket{\z'}$. Similarly, the fractional energy of the first and second half of the modes can be estimated by measuring the bottom-most (most significant) register qubit. Moreover, we note that combining computational basis measurements (or Hadamard tests) on $\ket{\z'}$ and $\varrho_{\s'}$ allows us to estimate the energy in a given mode. To finish, we note that while the total energy remains constant for particle-preserving GB circuits, this quantity can change when non-particle-preserving gates are included. In this case, while we cannot directly measure the total energy of the system (as the states need to be normalized), one can keep track of the total energy by using as a proxy the success probability of the imaginary-time evolutions.

\textbf{BQP-completeness}. 
 We now show that we can leverage our framework to devise a decision problem based on a restricted class of large optical quantum interferometers and prove that it is BQP-complete. We begin by introducing a family of quantum interferometers, that we refer to as  \textit{bit-structured}.

\begin{definition}[Bit-structured interferometer] A bit-structured interferometer acting on $2^n$ nodes consists of $L$ global beamsplitters, such that each global beamsplitter acts on $2^{n-1}$ modes. A global beamsplitter is specified by two natural numbers, $k\neq l$, between 1 and $n$. The global beamsplitter then acts on all the modes with indices $\{m\}$ such that their $k$-th bit is equal to 0, by applying local beamsplitters between modes with indices $m,m'$ that only differ in their $l$-th bit.
\label{def:interferometer}
\end{definition}

Our decision problem is then phrased as follows.

\begin{problem}
\label{pbm:interferometer} 
    Consider a bit-structured interferometer (see \Cref{def:interferometer}) acting on $2^n$ modes with $L\in\OC(\poly(n))$, 
    and an input state such that the first mode is displaced in position by a real constant $x$ while the state of the remaining modes is the vacuum. Then, decide whether the expectation value of the position on the first mode at the output of the interferometer is
    \begin{align}
        1.\,\langle\opq_1\rangle > \frac{2}{3} x\,,\qquad {\rm or} \qquad 2.\, \langle\opq_1\rangle < \frac{1}{3} x\,,\nonumber
    \end{align}
    given the promise that either one or the other is true. 
\end{problem}

\Cref{pbm:interferometer} is illustrated in \Cref{fig:bqp}, where we simulate an interferometer with $\sim8$ billion modes (i.e., a $33$-qubit circuit). There, we keep track of $\la \opq_1\ra / x$ (which corresponds to the overlap with the $\ket{0}^{\otimes n}$ state) as the state evolves through the beamsplitters. 

Our main result is the next theorem.
\begin{theorem}\label{theo:1}
     \Cref{pbm:interferometer} is BQP-complete.
\end{theorem}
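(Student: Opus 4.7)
The plan is to establish BQP-completeness by proving the two standard directions: (i) Problem~\ref{pbm:interferometer} lies in BQP, and (ii) every BQP problem reduces to Problem~\ref{pbm:interferometer} in polynomial time.

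For (i), I would invoke the qubit-simulation framework built in the preceding sections. The initial bosonic state has $\langle\opq_m\rangle = x\,\delta_{m,1}$ and $\langle\opp_m\rangle=0$, so the encoding of Eq.~\eqref{eq:encoding} yields (up to normalization) the single computational-basis state $|0\rangle\otimes|e_1\rangle$, preparable in $O(n)$ gates. As derived in the Evolution section and shown in \Cref{fig:gates}, each bit-structured global beamsplitter maps to a single two-qubit controlled-$R_y$ gate $CR_y(\theta)^{\bar k\to l}$ between register qubits $k$ and $l$, so the entire interferometer becomes a qubit circuit of size $L\in O(\poly(n))$. Because the gates are particle-preserving, the symplectic qubit remains in $|0\rangle$ throughout, so $\langle\opq_1\rangle/x$ equals the real matrix element $\langle 0,e_1|\,U\,|0,e_1\rangle$, which a polynomial number of Hadamard tests estimates to $1/6$ additive precision, enough to distinguish the two promise cases.

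For (ii), I would reduce an arbitrary BQP problem to Problem~\ref{pbm:interferometer}. Using the standard real-orthogonal embedding of complex quantum computation (replacing each amplitude $a+ib$ by a $2\times 2$ real block), one may assume the target is a real orthogonal circuit $C$ on $N$ qubits with $\langle 0^N|C|0^N\rangle > 2/3$ in YES and $<1/3$ in NO instances. Setting $n=N+O(1)$, I dedicate one register qubit, call it $a$, as a permanent $|0\rangle$-ancilla; this is consistent because the initial register state $|e_1\rangle=|10\cdots 0\rangle$ places every qubit except qubit $1$ in $|0\rangle$. Then the beamsplitter $CR_y(\theta)^{\bar a\to l}$ implements an effective single-qubit $R_y(\theta)$ on any logical qubit $l$, while beamsplitters $CR_y(\theta)^{\bar k\to l}$ with $k,l\neq a$ supply two-qubit entangling gates. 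A preliminary $CR_y(\pi)^{\bar a\to 1}$ rotates $|e_1\rangle$ to $\pm|0^n\rangle$ so that the simulation starts on the conventional vacuum, and a matching inverse at the end returns to the $|e_1\rangle$ basis required by the problem. The key technical step is to show this restricted gate set is universal for real orthogonal computation with only $\poly(N)$ overhead: I would argue via dynamical Lie algebra analysis that the generators $Y_l$ and $|0\rangle\langle 0|_k\otimes Y_l$ (both purely imaginary Hermitian, hence in $i\cdot\mf{so}(2^n)$) produce $X_kY_l$, $Z_kY_l$ and, inductively, every Pauli word with an odd number of $Y$s, so their iterated commutators span the full special orthogonal algebra. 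A Solovay-Kitaev-type compilation then realizes $C$ by $\poly(N)$ bit-structured beamsplitters, the output amplitude $\langle 0,e_1|\,U\,|0,e_1\rangle$ recovers $\langle 0^N|C|0^N\rangle$, and the BQP decision is encoded in Problem~\ref{pbm:interferometer} as required.

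I expect the hardest part to be the universality step. Three restrictions interlock that standard universality results for controlled rotations do not directly cover: the control is always projected onto $|0\rangle$ rather than arbitrary; the same angle $\theta$ is imposed simultaneously across all $2^{n-1}$ paired blocks picked out by the bit structure, forbidding any local selectivity; and the initial state is fixed to $|e_1\rangle$ rather than the vacuum. Resolving this requires both closing the dynamical Lie algebra to all of $\mf{so}(2^n)$ in a gate-efficient way and ordering the compiled beamsplitters so that the ancilla qubit $a$ remains disentangled throughout. Once these two points are handled, the reduction from an arbitrary universal quantum circuit is routine.
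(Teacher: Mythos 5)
Your containment argument matches the paper's: the input encodes as a single computational-basis state, each bit-structured global beamsplitter compiles to one two-qubit 0-controlled-$R_y$ acting on register qubits $k$ and $l$, and the promise gap is resolved by overlap estimation on a $\poly(n)$-size circuit. (The paper identifies the first mode with $\ket{0}^{\otimes n}$ rather than with $\ket{10\cdots0}$, so your extra basis-correcting $CR_y(\pi)$ is unnecessary, but that is only an indexing convention.)

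The hardness direction has a genuine gap, and it sits exactly where you predicted. Showing that the generators $Y_l$ and $\ketbra{0}{0}_k\otimes Y_l$ close under commutation onto all of $\mathfrak{so}(2^n)$ establishes controllability, not efficiency: a generic element of $SO(2^n)$ requires exponentially many gates, and Solovay--Kitaev only converts one finite universal gate set into another with polylogarithmic overhead in the precision --- it does not convert Lie-algebra density into a $\poly(N)$-gate compilation of an arbitrary target circuit. What is needed, and what the paper does, is a constant-overhead, gate-by-gate simulation: following Rudolph and Grover~\cite{rudolph20022}, one adjoins a single ancilla qubit that carries the imaginary parts of the amplitudes (your $2\times 2$ real-block embedding, realized as $\sum_r a_r\cos\theta_r\ket{r}\ket{0}+a_r\sin\theta_r\ket{r}\ket{1}$), and then verifies explicitly that each member of the universal set $\{R_z(\tau),R_y(\tau),F(\pi/2)\}$ is implemented by a \emph{single} 0-controlled-$R_y$: control on the acted-upon qubit with target on the ancilla for $R_z(\tau)$, an extra auxiliary $\ket{0}$ control for $R_y(\tau)$, and angle $\pi$ on two data qubits for $F(\pi/2)$. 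Each such gate is in turn exactly one global bit-structured beamsplitter, which gives the polynomial reduction directly. Your other two worries are non-issues: the uniform angle imposed across all $2^{n-1}$ mode pairs is precisely what makes the global beamsplitter a single clean two-qubit gate in the register (no mode-level selectivity is needed), and the imaginary-part ancilla is permitted to become entangled with the register --- only the auxiliary qubit used to de-control $R_y$ must remain in $\ket{0}$, which it does automatically since it is only ever used as a control.
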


\begin{figure}[t!]
    \centering
    \includegraphics[width=1\columnwidth]{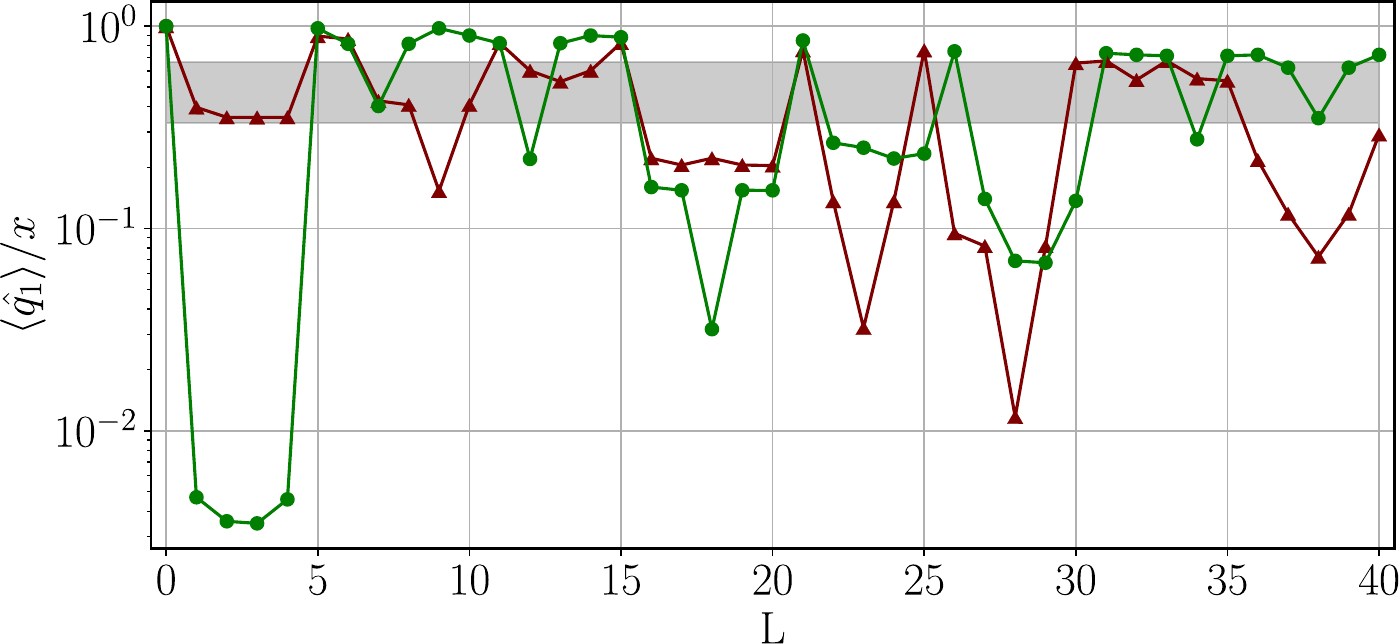}
    \caption{ \textbf{Simulation of a structured interferometer on $\sim8$ billion modes}. We illustrate \Cref{pbm:interferometer} by tracking two non-trivial evolutions of $\la \opq_1\ra / x$ along a large bit-structured interferometer, the green (red) plot corresponding to a YES (NO) instance. The gray region corresponds to $\frac{1}{3}<\la \opq_1\ra / x < \frac{2}{3}$. The simulations were performed with  \texttt{Qibo}~\cite{efthymiou2020qibo,efthymiou2022quantum}. }
    \label{fig:bqp}
\end{figure}

The proof of \Cref{theo:1} can be found in the SI, but we give here a summary of the key points. First, we show that \Cref{pbm:interferometer} is contained in BQP. To do this, we simply use the mapping between beamsplitters and qubit gates explained previously to prove that a bit-structured interferometer with $\OC(\poly(n))$ layers always results in a polynomial-size quantum circuit, which is a necessary condition to be contained in BQP.

Finally, we need to prove that \Cref{pbm:interferometer} is BQP-Hard. That is, if we can solve it, we can also solve any other problem in BQP with an additional overhead that is polynomial in $n$. This is done by first showing that each global beamsplitter gives rise to a controlled-$R_y$ gate (acting when the control qubit is in the $\ket{0}$ state). Then, we use the result from~\cite{rudolph20022} which states that  controlled-$R_y$ rotations constitute a universal gate set, in the sense that any quantum computation can be performed with these gates~\cite{shi2002both,aharonov2003simple}.

\textbf{Outlook}. This work contributed to the body of knowledge of schemes leading to BQP-complete tasks. In particular, along with Ref.~\cite{babbush2023exponential}, we show examples of how linear and energy-preserving evolutions of exponentially many simple physical systems can be efficiently simulated on a quantum computer. A key unique feature of our framework is that we translate a sequence of gates from one physical system to another (product of exponentials) rather than performing real-time evolution of a Hamiltonian (exponential of a sum).

Importantly, our work opens up several interesting research directions. For instance, we have shown how to simulate the evolution of the first and second moments of bosonic states. This makes the simulation complete only for Gaussian states. Hence, we envision two possible paths to extend our proposed framework to better approximate the simulation of non-Gaussian states. First, we could find ways to simulate the evolution of higher moments. To what extent this would improve the quality of a simulation for non-Gaussian states, such as Fock states, remains open. Second, because bosonic states can be written as a continuous sum over coherent states, approximating them on a grid of coherent states may also be a viable strategy to simulate non-Gaussian states.  Finally, we also leave for future work a precise characterization of the computational complexity of the quantum simulation of non-particle preserving GB circuits, which are likely outside of BQP.

\section*{Acknowledgments}

We thank Samuel Slezak, Christa Zoufal, Paolo Braccia and Nathan Killoran for insightful conversations. A.B. was initially supported by the U.S. Department of Energy through a quantum computing program sponsored by the Los Alamos National Laboratory (LANL) Information Science \& Technology Institute. A.B. also acknowledges support by CERN through the CERN Quantum Technology Initiative and by the quantum computing for earth observation (QC4EO) initiative of ESA $\phi$-lab, partially funded under contract 4000135723/21/I-DT-lr, in the FutureEO program.  M.C. and A.S. were initially supported by LANL's ASC Beyond Moore’s Law project.  M.L.
was supported by the Center for Nonlinear Studies at
LANL and by the Laboratory Directed Research and
Development (LDRD) program of LANL under project
number 20230049DR. D.G.M. was supported by LDRD program of LANL under project number 20230527ECR. 
This work was also supported by the Quantum Science Center (QSC), a National Quantum Information Science Research Center of the U.S. Department of Energy (DOE).

\bibliography{quantum.bib}
\clearpage

\appendix
\setcounter{theorem}{0}
\setcounter{corollary}{0}
\setcounter{proposition}{0}
\setcounter{problem}{0}
\setcounter{definition}{0}

\onecolumngrid

\section*{Supplemental Information for ``Gate-based quantum simulation of Gaussian bosonic circuits on exponentially many modes''}

\section{Framework}
\label{app:framework}

\subsection{Time evolution of quadrature operators in phase space}

Let us consider a system of bosons with $M$ modes, and let us assume $\hbar=1$. The state space of such a system is 
$    \HC = \bigotimes_{m=1}^M \FC_m$, where $\FC_m$ is the Fock space associated to the $m$-th mode. That is, each $\FC_m$ is spanned by the infinitely-many basis vectors $\{\ket{k}_m\}_{k\in\mathbb{N}}$ indicating the occupancy number of the mode. For example, if we fix the number of bosons to $3$ and let the number of modes be $M=4$, the basis state $\ket{0,1,2,0}\equiv\ket{0}\otimes\ket{1}\otimes\ket{2}\otimes\ket{0}$ corresponds to the state with one boson occupying the second mode and two bosons occupying the third mode. These basis states are eigenstates of the particle number operator, since they have a fixed number of particles, and are known as Fock states. 
The (non-Hermitian) creation $\hat{a}_m^\dagger$ and annihilation $\hat{a}_m$ operators are defined by their action on Fock states as follows,
\begin{equation}\begin{split}
    &\hat{a}_m^\dagger \ket{N_0,\dots, N_m,\dots,N_M} = \sqrt{N_m+1} \,\ket{N_0,\dots, N_m+1,\dots,N_M}\,, \\
    &\hat{a}_m \ket{N_0,\dots, N_m,\dots,N_M} = \sqrt{N_m}\, \ket{N_0,\dots, N_m-1,\dots,N_M}\,.
\end{split}
\end{equation}
They satisfy the commutation relations
\begin{equation} \begin{split}
    &[\hat{a}_m^\dagger, \hat{a}_{m'}^\dagger] = [\hat{a}_m, \hat{a}_{m'}] =0 \,,\\
    &[\hat{a}_m, \hat{a}_{m'}^\dagger] = \delta_{m,m'}\,.
\end{split}
\end{equation}
The (Hermitian) particle number operator is given by $\hat{a}_m^\dagger \hat{a}_m$.
The (Hermitian) position $\hat{q}_m$ and momentum $\hat{p}_m$ operators can be defined from the creation and annihilation operators as
\begin{equation}
    \opq_m = \frac{\hat{a}_m + \opa_m^\dagger}{\sqrt{2}}\,, \qquad \opp_m = i\,\frac{\opa_m^\dagger-\opa_m }{\sqrt{2}}\,.
\end{equation}
Which in turn conversely implies that
\begin{equation} \label{eq-ap:a_m-to-q_m}
    \opa_m = \frac{\opq_m +i \opp_m}{\sqrt{2}}\,, \qquad \opa_m^\dagger = \frac{\opq_m - i \opp_m}{\sqrt{2}}\,.
\end{equation}
The corresponding commutation relations for position and momentum are
\begin{equation}\label{eq-ap:p-q-comm}
\begin{split}
&[\opq_m, \opq_{m'}]= [\opp_m, \opp_{m'}] =0\,,\\
&[\opq_m, \opp_{m'}]=i\delta_{m,m'}\,.
\end{split}
\end{equation}

Let us collect the positions and momenta in a vector $\hat{\vec{z}}=(\opq_1,\dots,\opq_M,\opp_1,\dots,\opp_M)^T$. This allows us to write the commutation relations in Eq.~\eqref{eq-ap:p-q-comm} as
\begin{equation}
    \left[\z,\z^T\right] = i \Omega\,,
\end{equation}
where the $\Omega$ matrix is
\begin{equation}
    \Omega = \begin{pmatrix} 0& \id_{M} \\ - \id_{M} & 0\end{pmatrix} = iY\otimes \id_M \,.
\end{equation}
Here, $Y$ is the usual $2\times 2$ Pauli matrix and $\id_M$ the $M\times M$ identity matrix.
It will be convenient for us to explicitly write down the matrix entries of $\Omega$, which are
\begin{equation}
    \Omega_{\gamma,\gamma'} = \delta_{\gamma, \gamma'-M} -\delta_{\gamma, \gamma'+M}\,.
\end{equation}

Let us assume that our quantum bosonic system is governed by a quadratic time-independent Hamiltonian of the form
\begin{equation}
    \hat{H} = \frac{1}{2} \vec{\z}^T K \vec{\z}\,,
\end{equation}
where $K$ is a real $2M\times 2M$ symmetric matrix. In the Heisenberg picture, the equation of motion of an observable $\hat{O}$ is
\begin{equation}
    \frac{\partial \hat{O}}{\partial t} = i[\hat{H},\hat{O}]\,.
\end{equation}
Therefore, we find
\begin{equation}  \begin{split}
    \frac{\partial \z_\gamma}{\partial t}  &= i [\hat{H}, \z_\gamma] = \frac{i}{2}\Big[\sum_{\alpha\beta} \z_\alpha K_{\alpha\beta} \z_\beta, \z_\gamma\Big]  = \frac{i}{2}\sum_{\alpha\beta} K_{\alpha\beta} \Big[\z_\alpha  \z_\beta, \z_\gamma\Big] = \frac{i}{2}\sum_{\alpha\beta} K_{\alpha\beta} \left(\z_\alpha \Big[ \z_\beta, \z_\gamma\Big]+ \Big[ \z_\alpha, \z_\gamma\Big] \z_\beta \right) \\ &= \frac{i}{2}\sum_{\alpha\beta} K_{\alpha\beta} \left(\z_\alpha i(\delta_{\beta,\gamma-M} - \delta_{\beta,\gamma+M})+ i(\delta_{\alpha,\gamma-M} - \delta_{\alpha,\gamma+M} ) \z_\beta \right) \\ & = \frac{1}{2}\sum_{\alpha\beta} K_{\alpha\beta} \left(-\z_\alpha \Omega_{\beta\gamma}-\Omega_{\alpha\gamma} \z_\beta \right) = \frac{1}{2} \sum_{\alpha\beta} \left(\Omega_{\gamma\beta} K_{\beta \alpha} \z_\alpha + \Omega_{\gamma\alpha} K_{\alpha\beta} \z_\beta \right) \\ &= \left(\Omega K \z\right)_\gamma\,,
\end{split}
\end{equation}
where we used  $[AB,C]= A[B,C] + [A,C]B$, and the fact that $K$ is symmetric and $\Omega$ anti-symmetric. Hence, we arrive at
\begin{equation} \label{eq-ap:motion}
    \frac{\partial \z}{\partial t} = \Omega K \z\,.
\end{equation}
The solution to this differential equation is given by
\begin{equation}
    \z(t) = e^{t \Omega K} \z(0)\,.
\end{equation}
The solution of Eq.~\eqref{eq-ap:motion} must preserve the commutation relations of Eq.~\eqref{eq-ap:p-q-comm} in order to leave the kinematics invariant. 
Let us call $Q(t)=e^{t \Omega K}$ the propagator that takes the vector $\z(0)$ at time $0$ to the vector $\z(t)$ at time $t$. 
We impose $\left[\z(t),\z(t)^T\right] = [\z(0),\z(0)^T]$, which leads to
\begin{equation} \begin{split}
    \left[Q(t)\z(0),(Q(t)\z(0))^T\right]_{\alpha\beta} &=  \left[\left(\sum_\gamma Q(t)_{\alpha \gamma} \z_\gamma\right),\left(\sum_{\gamma'} Q(t)_{\beta \gamma'} \z_{\gamma'}\right)\right] = \sum_{\gamma,\gamma'} Q(t)_{\alpha \gamma} Q(t)_{\beta \gamma'} \left[\z_\gamma, \z_{\gamma'} \right] \\ &=i \sum_{\gamma,\gamma'} Q(t)_{\alpha \gamma} Q(t)_{\beta \gamma'} (\delta_{\gamma, \gamma'-M} -\delta_{\gamma, \gamma'+M}) =  i \sum_{\gamma,\gamma'} Q(t)_{\alpha \gamma} \Omega_{\gamma,\gamma'} Q(t)^T_{ \gamma' \beta} \\ &= i \left(Q(t)\, \Omega \,Q^T(t) \right)_{\alpha\beta}  \,. 
\end{split}
\end{equation}
It is clear then that $Q(t)\,\Omega \,Q^T(t)=\Omega$ if we are to maintain the commutation relations between positions and momenta. This is precisely the defining condition of symplectic matrices. Hence, the time evolution of $\z$ in phase space is given by a $2M\times 2M$ real symplectic matrix belonging to the group $\mathbb{SP}(M,\mathbb{R})$.

Let us now address the question of how the expectation value $\langle \z\rangle=(\la \opq_1\ra,\dots,\la \opq_M\ra ,\la \opp_1\ra,\dots,\la \opp_M\ra )^T $ of $\z$ evolves in time, given an initial quantum state $\rho_0$. We find that
\small
\begin{equation} \label{eq-ap:exp-val-motion}
    \frac{\partial \langle \z_\gamma \rangle}{\partial t} = \frac{\partial\, \Tr[\z_\gamma \rho_0]}{\partial t} = \Tr\left[\frac{\partial\z_\gamma}{\partial t} \rho_0 \right] = \Tr\left[(\Omega K \z)_\gamma \rho_0 \right] = \Tr\left[\left(\sum_{\alpha\beta} \Omega_{\gamma\alpha}  K_{\alpha\beta} \z_\beta\right) \rho_0 \right] = \sum_{\alpha\beta} \Omega_{\gamma\alpha}  K_{\alpha\beta}  \Tr\left[ \z_\beta \rho_0 \right] = (\Omega K\la\z\ra)_\gamma\,,
\end{equation}
\normalsize
which implies
\begin{equation}
     \frac{\partial \langle \z \rangle}{\partial t} =  \Omega K \la\z\ra \,,
    \label{eq-ap:equa_diff}
\end{equation}
and
\begin{equation} \label{eq-ap:phase-space-ev}
    \la\z\ra(t) = e^{t \Omega K} \la\z\ra(0)\,.
\end{equation}

We can also collect the expectation value of products of quadrature operators over $\rho_0$ in the $2M\times 2M$ positive-definite covariance matrix $\s$ whose entries are given by  
\begin{equation}
    \s_{\alpha \beta}=\frac{1}{2}\langle \z_\alpha \z_\beta+ \z_\beta\z_\alpha\rangle -\langle \z_\alpha\rangle\langle \z_\beta\rangle\,.
\end{equation}

 Let us derive the corresponding equation of motion. We start by considering
\small
\begin{align}
    \frac{\partial (\z_\gamma \z_\delta)}{\partial t} &= i [\hat{H}, \z_\gamma\z_\delta] = \frac{i}{2}\Big[\sum_{\alpha\beta} \z_\alpha K_{\alpha\beta} \z_\beta, \z_\gamma \z_\delta \Big] = \frac{i}{2}\sum_{\alpha\beta} K_{\alpha\beta} \Big[\z_\alpha  \z_\beta, \z_\gamma\z_\delta \Big] \nonumber \\ & = \frac{i}{2}\sum_{\alpha\beta} K_{\alpha\beta} \left(\z_\alpha \Big[ \z_\beta, \z_\gamma\Big]\z_\delta + \Big[ \z_\alpha, \z_\gamma\Big] \z_\beta \z_\delta + \z_\gamma \z_\alpha \Big[ \z_\beta, \z_\delta\Big]   + \z_\gamma\Big[ \z_\alpha, \z_\delta\Big] \z_\beta  \right) \nonumber \\ &= -\frac{1}{2} \sum_{\alpha\beta} K_{\alpha\beta} \left(\z_\alpha (\delta_{\beta,\gamma-M} - \delta_{\beta,\gamma+M})\z_\delta + (\delta_{\alpha,\gamma-M} - \delta_{\alpha,\gamma+M}) \z_\beta \z_\delta + \z_\gamma \z_\alpha (\delta_{\beta,\delta-M} - \delta_{\beta,\delta+M})   + \z_\gamma (\delta_{\alpha,\delta-M} - \delta_{\alpha,\delta+M}) \z_\beta  \right) \nonumber \\ &= -\frac{1}{2} \sum_{\alpha\beta} K_{\alpha\beta} \left(\z_\alpha \Omega_{\beta\gamma} \z_\delta + \Omega_{\alpha\gamma} \z_\beta \z_\delta + \z_\gamma \z_\alpha \Omega_{\beta\delta}  + \z_\gamma \Omega_{\alpha\delta} \z_\beta  \right) \nonumber \\ &= -  (\z^T K \Omega)_\gamma \z_\delta -   \z_\gamma (\z^T K \Omega )_\delta  \nonumber = ( \Omega K \z)_\gamma \z_\delta +  \z_\gamma ( \Omega K\z)_\delta    \,,
\end{align}
\normalsize
where we used that $[AB,CD]= A[B,C]D + [A,C]BD + CA[B,D] + C[A,D]B$, and the fact that $K$ ($\Omega$) is symmetric (anti-symmetric). In matrix form, we get
\begin{equation}
    \frac{\partial (\z \z^T)}{ \partial t} = \Omega K \z \z^T - \z \z^T K\Omega \,.
\end{equation}
Let us now look at the evolution of the expectation value of two-point correlators. Analogously to Eq.~\eqref{eq-ap:exp-val-motion}, we have
\begin{align}
    \frac{\partial \langle \z_\gamma\z_\delta \rangle}{\partial t} &= \frac{\partial\, \Tr[\z_\gamma \z_\delta\rho]}{\partial t} = \Tr\left[\frac{\partial\z_\gamma}{\partial t}\z_\delta \rho + \z_\gamma \frac{\partial\z_\delta}{\partial t} \rho \right] = \Tr\left[(\Omega K \z)_\gamma \z_\delta \rho + \z_\gamma (\Omega K \z)_\delta \rho \right] \nonumber \\& = \Tr\left[\left(\sum_{\alpha\beta} \Omega_{\gamma\alpha}  K_{\alpha\beta} \z_\beta\right)\z_\delta \rho + \z_\gamma\left(\sum_{\alpha\beta} \Omega_{\delta\alpha}  K_{\alpha\beta} \z_\beta \right) \rho\right] = \sum_{\alpha\beta} \Omega_{\gamma\alpha}  K_{\alpha\beta}  \Tr\left[ \z_\beta\z_\delta \rho \right] + \sum_{\alpha\beta} \Omega_{\delta\alpha}  K_{\alpha\beta}  \Tr\left[ \z_\gamma\z_\beta \rho \right]\nonumber \\&= (\Omega K \la\z \z^T\ra)_{\gamma\delta} - ( \la\z \z^T\ra  K \Omega)_{\gamma\delta}\,,
\end{align}
or, in matrix form,
\begin{equation}
    \frac{\partial (\la \z \z^T \ra)}{ \partial t} = \Omega K \la \z \z^T\ra  - \la \z \z^T\ra K\Omega  \,.
\end{equation}
Therefore, we arrive at    
\begin{equation}
    \frac{\partial \s}{ \partial t} = \Omega K \s-\s K\Omega\,.
\end{equation}
If $K$ represents a particle-preserving Hamiltonian, then $[\Omega, K]=0$ according to Supplemental Proposition~\ref{prop-ap:unitary-ps}, so we can write
\begin{equation}
    \frac{\partial \s}{ \partial t} = \left[\Omega K, \s\right]\,.
\end{equation}
The solution to this equation is
\begin{equation}
    \s =  e^{\Omega K t} \,\s(0)\,  e^{-\Omega K t}\,.
\end{equation}
If instead $K$ represents a non-particle-preserving Hamiltonian such that $\{\Omega, K\}=0$, we can write 
\begin{equation}
    \frac{\partial \s}{ \partial t} = \left\{\Omega K, \s\right\}\,,
\end{equation}
whose solution is
\begin{equation}
    \s(t) =  e^{\Omega K t} \,\s(0)\,  e^{\Omega K t}\,.
\end{equation}

\subsection{Pauli basis for the symplectic algebra}
\label{app:symplectic}
Here we present a useful Supplemental Proposition that provides a Pauli basis for the symplectic algebra $\mathfrak{sp}(M,\mathbb{R})$.

\begin{supproposition}
    \label{lem-ap:sp-algebra}
     An orthogonal basis for the standard representation of the $\mathfrak{sp}(M,\mathbb{R})$ algebra, where $M=2^{n}$, is given by the set 
     \begin{equation} \label{eq-ap:sp-dla}
	   B_{\mathfrak{sp}(M,\mathbb{R})}\equiv i\{Y\otimes P_s \}\,\cup\, i\{\id\otimes P_a\} \,\cup\,\{X\otimes P_s \}\,\cup\,\{Z\otimes P_s \} \,,
    \end{equation}
    where $P_s$ and $P_a$ belong to the sets of arbitrary symmetric and anti-symmetric Pauli strings on $n$ qubits, respectively, and $\id,X,Y,Z$ are the usual $2\times 2$ Pauli matrices.
\end{supproposition}

\begin{proof}
    We first recall that any $2M\times 2M$ matrix $A$ satisfying $A^T\Omega=-\Omega A$ belongs to $\mathfrak{sp}(M,\mathbb{R})\subset {\rm End}(\mbb{R}^{2M})$. Clearly, the (phased) Pauli operators on $n+1$ qubits in Eq.~\eqref{eq-ap:sp-dla} constitute $2^{n+1}\times 2^{n+1} =2 M \times 2M$ matrices, with $M=2^n$. Also, note they are all real-valued (although not all anti-Hermitian), which follows from the fact that a Pauli string is real (purely imaginary) when it contains an even (odd) number of $Y$'s, i.e. when it is symmetric (anti-symmetric). Thus, $B_{\mathfrak{sp}(M,\mathbb{R})}\subset {\rm End}(\mbb{R}^{2M})$. We know from Proposition 1 in Ref.~\cite{garcia2024architectures} that they satisfy the symplectic property, implying $B_{\mathfrak{sp}(M,\mathbb{R})}\subset \spf$. Given they are Hilbert-Schmidt orthogonal, and that $|B_{\mathfrak{sp}(M,\mathbb{R})}|=M(2M+1)=\dim(\spf)$, to prove they constitute an orthogonal basis it simply remains to show that they are closed under commutation. 
    
    To do so, we first recall that the commutator of two anti-symmetric or symmetric matrices is anti-symmetric, whereas the commutator of a symmetric matrix and an anti-symmetric one is symmetric.
     We start with the (non-zero) commutator $X\otimes P_s$ and $X\otimes P_s'$, which gives an operator of the following form
     \begin{equation}
         [X\otimes P_s , X\otimes P_s'] \propto\pm i\id\otimes P_a\,.
     \end{equation}
     The $\pm i$ factor follows from the fact that the Pauli strings $P_s$ and $P_s'$ differ at an odd number of sites. The same is true if we replace $X$ by $Z$ on the first qubit, 
     \begin{equation}
         [Z\otimes P_s , Z\otimes P_s'] \propto\pm i\id\otimes P_a\,.
     \end{equation}
     We continue by computing the (non-zero) commutator of two operators of the form $X\otimes P_s$ and $Z\otimes P_s'$,
     \begin{equation}
         [X\otimes P_s , Z\otimes P_s'] \propto\pm iY\otimes P_s''\,.
     \end{equation}
     Again, the $\pm i$ factor follows from the fact that $X\otimes P_s$ and $Z\otimes P_s'$ differ at an odd number of sites. Next, let us look at the non-zero commutator of operators $X\otimes P_s$ or $Z\otimes P_s$ with $i\id\otimes P_a$,
     \begin{align}
         [X\otimes P_s,i\id\otimes P_a] \propto\pm X \otimes P_s'\quad{\rm or} \quad
         [Z\otimes P_s,i\id\otimes P_a] \propto\pm Z \otimes P_s'\,.
     \end{align}
     Here, the $i$ factor arising from commuting the Pauli strings cancels out with the $i$ in $i\id\otimes P_a$. Similarly, the non-zero commutator of $X\otimes P_s$ or $Z\otimes P_s$ with $iY\otimes P_s'$ is as follows
     \begin{align}
         [X\otimes P_s,iY\otimes P_s'] \propto\pm Z\otimes P_s''\quad {\rm or}\quad
         [Z\otimes P_s,iY\otimes P_s'] \propto\pm X\otimes P_s''\,.
     \end{align}
     Furthermore, commuting $iY\otimes P_s$ and $iY\otimes P_s'$, or $i\id\otimes P_a$ with $i\id\otimes P_a'$ leads to either zero or an operator of the form
     \begin{align}
         [iY\otimes P_s,iY\otimes P_s'] \propto\pm i\id\otimes P_a\quad {\rm or} \quad
         [i\id\otimes P_a,i\id\otimes P_a'] \propto\pm i\id\otimes P_a''\,.
     \end{align}
     Finally, the non-zero commutator of $iY\otimes P_s$ and  $i\id\otimes P_a$ gives $\pm i Y\otimes P_s $, 
     \begin{equation}
         [iY\otimes P_s,i\id\otimes P_a] \propto\pm i Y\otimes P_s \,.
     \end{equation}
     Therefore, we conclude that the set $B_{\spf}\subset \spf$ of mutually orthogonal operators is closed under commutation and satisfies $\dim(B_{\spf})=M(2M+1) =\dim(\spf)$. Thus, it constitutes an orthogonal basis for the standard representation of $\spf$.
\end{proof}    

\subsection{Particle-preserving gates}
\label{app:prop1}
Next, we show that particle-preserving Gaussian bosonic (GB) gates lead to unitary evolutions at the qubit level.

\begin{supproposition} \label{prop-ap:unitary-ps}
    When a gate  generator is of the form
    \begin{equation} \label{eq-ap:parti-preserv-H}
    \hat{H} = \sum_{m,m'=1}^{M} h_{m m'} \,\hat{a}_m^\dagger \hat{a}_{m'} +\frac{\Tr[h]}{2}\id_{2M}  \,, 
\end{equation}
where $h$ is a Hermitian matrix, then $[\Omega, K]=0$, and the propagator $e^{t\Omega K}$ is the real time evolution under the effective Hamiltonian $-i\Omega K$.
\end{supproposition}

\begin{proof}
    Expressed in terms of position and momentum operators, we find
\begin{align} \label{eq-ap:H-particle-preserving}
    \hat{H} &= \frac{1}{2} \sum_{m,m'=1}^{2^n} h_{m m'}\, (q_m -i p_m) (q_{m'} + i p_{m'})  +\frac{\Tr[h]}{2}\id_{2M} \\&=
    \frac{1}{2} \sum_{m,m'=1}^{2^n} h_{m m'} \, (q_m q_{m'} + p_m p_{m'} + i q_m p_{m'} - i p_m q_{m'})  +\frac{\Tr[h]}{2}\id_{2M} \nonumber \\& = \frac{1}{2} \sum_{m=1}^{2^n} h_{mm} \, (q_m^2 + p_m^2)  +  \sum_{\substack{m,m'=1 \\m'>m}}^{2^n} {\rm Re}[h_{mm'}] \, (q_m q_{m'} + p_m p_{m'}) +  \sum_{\substack{m,m'=1 \\m'>m}}^{2^n} {\rm Im}[h_{mm'}] \, (q_m p_{m'} - p_m q_{m'}) \,,
\end{align}
where we used the commutation relations from Eq.~\eqref{eq-ap:p-q-comm}. In other words, particle-preserving gate generators are such that the $K$ matrix is a real linear combination of Paulis of the form $\id\otimes P_s$ (corresponding to the first two sums in Eq.~\eqref{eq-ap:H-particle-preserving}) and/or $Y\otimes P_a$ (corresponding to the last sum in Eq.~\eqref{eq-ap:H-particle-preserving}). This automatically implies that $[\Omega,K]=0$. Finally, either $\Omega K$ is a real combination of Paulis of the form $ (iY\otimes \id_M) (\id\otimes P_s)= iY \otimes P_s$ or $ (iY\otimes \id_M) (Y\otimes P_a)= i\id \otimes P_a$ (both of which are in the symplectic algebra according to~\Cref{lem-ap:sp-algebra}). That is, $\Omega K$ is anti-Hermitian and the symplectic propagator $e^{t\Omega K}$ is unitary.
Hence, these types of gate generators result in unitary dynamics in phase space.
\end{proof}

\subsection{Non-particle-preserving gates}

We here show that a family of non-particle-preserving GB gates lead to an imaginary-time evolution at the qubit level.

\begin{supproposition} \label{prop-ap:imaginary-ps}
    When a gate  generator is of the form
    \begin{equation} \label{eq:no-parti-preserv-H}
    \hat{H} = \sum_{m,m'=1}^{M} \Delta_{m m'}^\dagger \,\hat{a}_m \hat{a}_{m'}  +\sum_{m,m'=1}^{M} \Delta_{m m'} \,\hat{a}_m^\dagger \hat{a}_{m'}^\dagger \,,
\end{equation}
where $\Delta$ is a symmetric matrix, then $\{\Omega, K\}=0$, and the propagator $e^{t\Omega K}$ is an imaginary time evolution under the effective Hamiltonian $-\Omega K$. 
\end{supproposition}

\begin{proof}
    In terms of positions and momenta, we find
\begin{align} \label{eq-ap:H-no-particle-preserving}
    \hat{H} &= \frac{1}{2} \sum_{m,m'=1}^{2^n} \Delta_{mm'}^\dagger \, (q_m +i p_m) (q_{m'} + i p_{m'}) + \frac{1}{2} \sum_{m,m'=1}^{2^n} \Delta_{mm'}\, (q_m -i p_m) (q_{m'} - i p_{m'}) \nonumber\\ &= \frac{1}{2} \sum_{m,m'=1}^{2^n} \Delta_{mm'}^\dagger \, (q_m q_{m'} - p_m p_{m'} + i q_m p_{m'} + i p_m q_{m'}) + \frac{1}{2} \sum_{m,m'=1}^{2^n} \Delta_{mm'} \, (q_m q_{m'} - p_m p_{m'} - i q_m p_{m'} - i p_m q_{m'}) \nonumber \\ &=  \sum_{m,m'=1}^{2^n} {\rm Re} \left[\Delta_{mm'}\right] \, (q_m q_{m'} - p_m p_{m'}) + \sum_{m,m'=1}^{2^n}{\rm Im}\left[  \Delta_{mm'}\right] \, (q_m p_{m'} +  p_m q_{m'})\,.
\end{align}
In this case, the $K$ matrix is a real linear combination of Paulis of the form $Z\otimes P_s$ (corresponding to the first sum in Eq.~\eqref{eq-ap:H-no-particle-preserving}) and/or $X\otimes P_s$ (corresponding to the second sum in Eq.~\eqref{eq-ap:H-no-particle-preserving}). 
This implies that $\{\Omega,K\}=0$.
Then, either $\Omega K$ is a real combination of Paulis of the form $ (iY\otimes \id_M) (Z\otimes P_s)= X \otimes P_s$ or $ (iY\otimes \id_M) (X\otimes P_s)= Z\otimes P_s$ (both of which are in the symplectic algebra according to~\Cref{lem-ap:sp-algebra}). That is, $\Omega K$ is Hermitian and the symplectic propagator $e^{t\Omega K}$ is given by the imaginary-time evolution of the effective gate generator $-\Omega K$.
\end{proof}

\section{From bosonic gates to qubit gates}
\label{app:gates}

\subsection{Local gates}

\begin{itemize}
    \item \textbf{Phase gate:}
    This gate is described by the generator $\hat{H} = \opq_m^2 + \opp_m^2$ in terms of bosonic operators. Therefore the real symmetric $K$ matrix can be expressed as
    \begin{equation}
        K=2 \left(\ketbra{m}{m} + \ketbra{m+M}{m+M}\right) =  2 \left(\ketbra{0}{0} \otimes \ketbra{m}{m} + \ketbra{1}{1} \otimes \ketbra{m}{m} \right)= 2 \id \otimes \ketbra{m}{m}\,.
    \end{equation}
    The associated generator acting on the qubit picture is $\Omega K = 2i Y \otimes \ketbra{m}{m}$, and thus
    \begin{align}
        e^{t\Omega K} &= \sum_{s=0}^{\infty} \frac{(t \Omega K)^s}{s!} = \sum_{s=0}^{\infty} \frac{(2it\, Y\otimes \ketbra{m}{m})^s}{s!} = \id_{2M} + (\cos(2t)-1) \id \otimes \ketbra{m}{m} + i\sin(2t) Y \otimes \ketbra{m}{m}\\
        &=\id\otimes\overline{\ketbra{m}{m}}+\id\otimes \ketbra{{m}}{{m}}  + (\cos(2t)-1) \id \otimes \ketbra{m}{m} + i\sin(2t) Y \otimes \ketbra{m}{m}\\
        &=\id\otimes \overline{\ketbra{m}{m}}+ (\cos(2t)) \id \otimes \ketbra{m}{m} + i\sin(2t) Y \otimes \ketbra{m}{m}\\
        &=\id\otimes \overline{\ketbra{m}{m}}+ e^{2itY} \otimes \ketbra{m}{m}\,.
    \end{align}
    Above we have used the fact that $\id_{2M}=\id\otimes\ketbra{m}{m}+\id\otimes \overline{\ketbra{m}{m}}$ where $\overline{\ketbra{m}{m}} \coloneq \id - \ketbra{m}{m}$ is the projector onto the orthogonal complement of $\ket{m}$. 
    Hence this gate acts trivially when the state in the register qubits is $\ket{m'}$ such that $m'\neq m$, while it applies an $R_y(4t)$ rotation on the symplectic qubit otherwise (here we assume the standard definition $R_y(\theta)=e^{-i \theta Y /2}$). Hence, the associated qubit gate is a SELECT-$R_y(4t)$.

    \item \textbf{Beamsplitter:}
    This gate is described by $ \hat{H}=   \opq_m \opp_{m'}- \opq_{m'}\opp_m $, where $m\neq m'$, in bosonic operators. Therefore the real symmetric $K$ matrix can be expressed as
    \begin{equation} \label{eq-ap:K-beamsplit}
        K=2 \left(   \ketbra{m'+M}{m} + \ketbra{m}{m'+M} - \ketbra{m+M}{m'} - \ketbra{m'}{m+M} \right) = 2iY \otimes ( \ketbra{m}{m'} -  \ketbra{m'}{m})\,.
    \end{equation}
    Therefore $\Omega K = 2i \id \otimes (i \ketbra{m}{m'} - i \ketbra{m'}{m})$. Here, instead of directly exponentiating this operator and finding a closed formula (as we did with the phase gate), we will derive a sequence of gates whose combined actions lead to $e^{\Omega K }$.

    We begin by noting that $\Omega K$  acts trivially on the symplectic qubit. Therefore we focus on the action on the register qubits, where it corresponds to a $y$ rotation in the subspace spanned by $\ket{m}$ and $\ket{m'}$. We write the associated classical bitstrings as $m=m_n\cdots m_1$, and analogously for $m'$. We denote the bitstring operation $\overline{x}$ as taking the bit-wise negation of each individual bit. 
    We separate the bitstring indices between those where the bits of $m$ and $m'$ match, and those where they differ. We call the first set $e=\{e_j\}_{1\leq j \leq E}$, such that $m_{e_j} = m'_{e_j}$ $\forall j$, and the second set $d=\{d_j\}_{1\leq j \leq D}$, such that $m_{d_j} = \overline{m'_{d_j}}$ $\forall j$, where $E+D=n$. We can then factorize Eq.~\eqref{eq-ap:K-beamsplit} to obtain
    \begin{equation}
        \Omega K= 2i\id \otimes \left(\prod_{j=1}^E\ketbra{m_{e_j}}{m_{e_j}}_{e_j} \cdot  (i\ketbra{m_d}{\overline{m_d}}-i\ketbra{\overline{m_d}}{m_d})_d\right) \,,
    \end{equation}
    where the notation $\ketbra{m_{e_j}}{m_{e_j}}_{e_j}$ indicates a projector on  $\ket{e_j}$ on qubit $e_j$ and identity on the rest, and $(i\ketbra{m_d}{\overline{m_d}}-i\ketbra{\overline{m_d}}{m_d})_d$ acts non-trivially on the qubits whose indexes belong in $d$. In fact, it is a $y$-rotation in the subspace spanned by ($\ket{\overline{m_d}},\ket{m_d}$). We (arbitrarily) choose to map this rotation to the least-significant qubit where the $m$ and $m'$ bitstrings differ, that is, on qubit $d_1$. To do so we are going to implement a change of basis using controlled multi-NOTs, so that $\ket{m_d}\rightarrow\ket{0}\otimes\ket{0}^{\otimes{D-1}}$ and $\ket{\overline{m_d}}\rightarrow\ket{1}\otimes\ket{0}^{\otimes{D-1}}$. First, we apply $B \coloneq X_{d_1}^{m_{d_1}}$ so that the least-significant qubit matches the previous expression. Then we apply the two following controlled multi-NOT:
    \begin{align}
        C_0 \coloneq \text{CTRL}(\ketbra{0}{0}_{d_1}) \prod_{j=2}^D X_{d_j}^{m_{d_j}}\,,\qquad
        C_1 \coloneq \text{CTRL}(\ketbra{1}{1}_{d_1}) \prod_{j=2}^D X_{d_j}^{\overline{m_{d_j}}}\,,
    \end{align}
    where $\text{CTRL}(\Pi) U_k$ is the gate $U$ applied to qubits from set $k$ controlled on the single-qubit local projector $\Pi$, for example, $\text{CTRL}(\ketbra{1}{1}_{5})X_1$ is an $X$-gate on qubit $1$ controlled by the qubit $5$. Then we can apply the SELECT-$R_y$ gate with a target on the qubit whose index is $d_1$, whose gate generator is as follows.
    \begin{equation}
      SY \coloneq \prod_{j=1}^E\ketbra{m_{e_j}}{m_{e_j}}_{e_j} \prod_{j=2}^D\ketbra{0}{0}_{d_j} Y_{d_1}\,.
    \end{equation} 
      Finally, we apply the Hermitian conjugate of the change of basis $B^{\dagger} C_0^{\dagger}C_1^{\dagger}=B C_0 C_1$ to return to our original basis. Overall the transformation goes as follows:
      \begin{align}
           &e^{2i t \id \otimes i( \ketbra{m}{m'} -  \ketbra{m'}{m})} = e^{2it \id \otimes C_1C_0B SY BC_0C_1} = \id \otimes C_1C_0B e^{2it SY}BC_0C_1=\\
            &\id \otimes (C_1C_0B) \text{SELECT}\left(\prod_{j=1}^E\ketbra{m_{e_j}}{m_{e_j}}_{e_j} \prod_{j=2}^D\ketbra{0}{0}_{d_j}\right) R_y(4t)_{d_1}  (BC_0C_1)\,.
      \end{align}
    where $\text{SELECT}(\Pi) U_k$ is the gate $U$ applied to qubits from set $k$ controlled on the projector $\Pi$, for example, $\text{SELECT}(\ketbra{0}{0}_{3}\ketbra{1}{1}_{5})X_1$ is an $X$-gate on qubit $1$ controlled by the qubit $5$ and 0-controlled by qubit $3$.

    \item  \textbf{Squeezing Gate:}
    This gate is described by the generator $ \hat{H}= \pm (\opp_m \opq_m + \opq_m \opp_m)$. Therefore the real symmetric $K$ matrix can be expressed as
    \begin{equation}\label{eq-ap:lcu-state}
        K=2(\ketbra{m}{m+M} + \ketbra{m+M}{m}) =  2(\ketbra{0}{1} + \ketbra{1}{0})\otimes \ketbra{m}{m} = 2X \otimes \ketbra{m}{m}\,.
    \end{equation}
    The associated qubit generator is $\Omega K =\pm 2 Z \otimes \ketbra{m}{m}$. It is an imaginary time evolution  $e^{\pm 2t Z \otimes \ketbra{m}{m}}= e^{\mp i t' 2 Z\otimes \ketbra{m}{m}}$, where $t'=it$, under the effective Hamiltonian $\Omega K$.     For small $t$ we have $e^{\pm 2t  Z \otimes \ketbra{m}{m}} = \id \pm 2t Z \otimes \ketbra{m}{m} + O(t^2)$. And therefore we want the state $\ket{\z}$ to be transformed (up to normalization) as
    \begin{equation}
        \ket{\z} \rightarrow (1 \pm 2t) \la \hat{q}_m\ra \ket{0}\otimes\ket{m} + (1 \mp 2t) \la \opp_m\ra \ket{1}\otimes\ket{m} + \sum_{m'\neq m} \la\opq_{m'}\ra \ket{0}\otimes\ket{m'} + \la\opp_{m'}\ra \ket{1}\otimes\ket{m'}\,.\label{eq:ap-state-z}
    \end{equation}
    This can be implemented by a heralded protocol as a Linear Combination of Unitaries (LCU) of the form
    \begin{equation} 
        a \id +  
        b (\id \otimes\ketbra{m}{m} - \id\otimes\overline{\ketbra{m}{m}}) + 
        c (Z\otimes\ketbra{m}{m} + \id\otimes\overline{\ketbra{m}{m}}) + 
        d (-Z\otimes\ketbra{m}{m} + \id\otimes\overline{\ketbra{m}{m}})\,,
    \end{equation}
     with $a + b + c + d = 1$ and $a,b,c,d \geq 0$. 
    Applied to the state $\ket{\z}$ the above LCU yields the following state
    \small
    \begin{equation}\label{eq-ap:im-time-state}
        (a+b+c-d) \la \opq_m\ra \ket{0}\ket{m} + (a+b-c+d) \la\opp_m\ra \ket{1}\ket{m} + (a-b+c+d) \sum_{m'\neq m} \la\opq_{m'}\ra \ket{0}\ket{m'} + \la\opp_{m'}\ra \ket{1}\ket{m'}\,.
    \end{equation}
    \normalsize
    We want this state to be proportional to the one in Eq.~\eqref{eq:ap-state-z} by a factor $\gamma<1$. We thus need to solve the linear system of equations 
    \begin{align}
    \begin{cases}
        a + b + c + d &= 1\\
        a+b+c-d &= \gamma(1\pm2t)\\
        a+b-c+d &= \gamma(1\mp 2t)\\
        a-b+c+d &= \gamma
    \end{cases}
    \,,
    \end{align}
    whose solution is
    \begin{equation}
        a = \frac{3\gamma -1}{2}\,, \qquad b=\frac{1-\gamma}{2}\,,\qquad c= \frac{1-\gamma\pm 2\gamma t}{2}\,, \qquad d=\frac{1-\gamma\mp 2\gamma t}{2}\,.
    \end{equation}
    Since $c$ and $d$ need to be larger than 0,
    \begin{equation}
        \gamma \leq \frac{1}{1\mp 2t} \,, \qquad\gamma \leq \frac{1}{1\pm2t}\,.
    \end{equation}
    In order to maximize the probability of success we should choose the maximum $\gamma$ that satisfies these constraints. Depending on the sign of $\hat{H}$ one or the other inequalities above is saturated. Therefore we choose $\gamma=1/(1+2t)$. This yields
    \begin{equation}
        a= \frac{1-t}{1+2t}\,, \qquad b= \frac{t}{1+2t} \,,\qquad c=\frac{t\pm t}{1+2t}\,, \qquad d= \frac{t\mp t}{1+2t}\,.
    \end{equation}
    Let us calculate the probability of success as the norm of the state~\eqref{eq-ap:im-time-state} for small $t$,
    \begin{equation}
        \frac{1}{(1+2t)^2}\left((1\pm2t)^2\la q_m\ra^2 + (1\mp 2t)^2 \la p_m\ra^2 + (1-\la q_m\ra^2 - \la p_m \ra^2) \right)\approx \frac{1\pm4t\la q_m\ra^2 \mp 4t \la p_m\ra^2}{1+4t}\,.
    \end{equation}
    When $\hat{H}=\opp_m \opq_m + \opq_m \opp_m$, we can see that the best case is $\la q_m\ra=1$ and $\la p_m\ra=0$, which yields a probability of failure of $0$, and the worst case is $\la p_m\ra=1$, which yields a probability of failure of $\sim8t$.

    For a squeezing gate, we are given a bitstring description of the mode to which it applies, and the squeezing parameter $t$. This allows us to compute $(a,b,c,d)$ as per the above equations. In practice, we add two ancillary qubits initialized to zero. We design a unitary $U$ such that $\ket{00} \xrightarrow{U} \sqrt{a}\ket{00} + \sqrt{b}\ket{01} + \sqrt{c}\ket{10} + \sqrt{d} \ket{11}$, and apply it to the ancillary register. Then we apply the SELECT-unitaries as per the LCU, derived above. As one unitary is the identity, we do not need to apply this gate, and as either $c$ or $d$ is zero we do not need to apply the corresponding gate either. Each SELECT gate also has controls on the register to select the mode it is being applied to). If the gate generator $\hat{H}$ has a positive sign, then $d=0$, and we apply successively:
    \begin{align}
        B &\coloneq \text{SELECT}(\ketbra{01}{01}_a) (e^{i\pi})_0 \text{SELECT}(\ketbra{m}{m})_r\\
        C &\coloneq \text{SELECT}(\ketbra{10}{10}_a) Z_0 \text{SELECT}(\ketbra{m}{m})_r\,.
    \end{align}
    We denote $O_{a/0/r}$ operations applied to the ancillary/ symplectic/ register qubit(s). 
    We then apply the hermitian conjugate of the state preparation $U^{\dagger}$, and post-select those states where the ancillary qubits are measured to be the $\ket{00}$ state. Overall the transformation is     $\ketbra{00}{00}_a U_a^{\dagger} B C U_a \ketbra{00}{00}_a$.

\item \textbf{Displacement gate:} Displacement gates cannot be implemented as qubit gates on a single copy of the input states. Indeed displacement implies adding a number to an amplitude, whereas unitaries acting on a single copy of a state can only multiply amplitudes. While access to multiple copies could in principle be used to implement non-linear transformations~\cite{holmes2021nonlinear}, we do not consider this setting here.

\end{itemize}

\subsection{Global bit-structured gates}
\label{app:polytuni}
In this section, we present a list of global bosonic gates that can be easily translated to qubit gates. In particular, the local interferometric gates map to global qubit gates composed of multi-qubit controlled operations. To mitigate this issue, we can combine local GB gates into global ones, such that their qubit counterparts require fewer multi-qubit controls. We will henceforth refer to the GB gates which effectively translate into local qubit operations as \textit{global bit-structured Gaussian gates}.  
\begin{itemize}
    \item \textbf{Phase gate:} It is defined with a binary condition describing which modes the same local gate is applied to. It is given as pairs of indices and binary values. For example, $((1,1),(3,0))$ translates into the binary condition $m_1\overline{m_3}=1$, which means that the least significant bit should be $1$ and the third bit should be $0$. For $2^3$ modes this implies that the rotation gates apply to modes $001=1$ and $011=3$. In the corresponding qubit gate, this bitstring condition directly translates into a SELECT on the register. We denote these gates as $P(((k_j,b_j))_j,t)$. Using the same notations as for the local gates, and using $0$ as the index for the symplectic qubit, for the example $m_1\overline{m_3}=1$ we find 
    \begin{equation}
        P(((1,1)(3,0)),t) \rightarrow \text{SELECT}\left(\ketbra{1}{1}_1 \ketbra{0}{0}_3\right) R_y(4t)_0\,.
    \end{equation}
     The shorter the bitstring condition of the phase gate is, the more local the operation is in qubits (fewer controls in the SELECT) and the less local it is in the interferometer (more modes are acted upon non-trivially). In the case no bit condition is given, the same rotation gate is applied to all modes and therefore it is simply an $R_y$ on the symplectic qubit.
     \item \textbf{Global Beamsplitter:} It is also defined with a binary condition describing which modes the same local gate is applied to. But as a beamsplitter is a two-mode gate, an additional index $l$ is given to determine how the modes are paired. The $l$-th bit cannot be part of the bitstring condition. Each mode whose index satisfies the bitstring condition is paired with the one whose index has all bits in common but the $l$-th one. 

    For example, the global bit-structured beamsplitter on $2^3$ modes described by $((3,0))$, $l=1$ is applied to the second half of the modes ($\overline{m_3}=1$), pairing even modes $0m_20$ with odd modes $0m_21$. Therefore it pairs modes $(000,001)$ and $(010,011)$. We denote these gates as $BS(((k_j,b_j))_j,l,t)$. The example gate may then be written as
    \begin{equation}
        BS(((3,0)),1,t) \rightarrow \text{CTRL}\left(\ketbra{0}{0}_3 \right) R_y(4t)_1\,.
    \end{equation}
    Note that this particular example is of interest because it corresponds to the 0-controlled-$R_y$ gate that is used extensively in the BQP-completeness proof.
    \item \textbf{Squeezing gate:} The modes to which the squeezing applies are also described as a bitstring condition. It is the same as for the rotation gates but with the squeezing apparatus on the ancillary register instead of the $R_y$ on the symplectic qubit. We denote them as $S(((k_j,b_j))_j,r)$.
 \end{itemize}   

Notice that when the binary condition applies to all bits, then we retrieve one local gate from the previous section.

\section{BQP-completeness}

Here we provide proof that Problem 1 is BQP-complete. Let us first recall \Cref{def-ap:interferometer} and \Cref{pbm-ap:interferometer}. 

\begin{definition}[Bit-structured interferometer] A bit-structured interferometer acting on $2^n$ nodes consists of $L$ global beamsplitters, such that each global beamsplitter acts on $2^{n-1}$ modes. A global beamsplitter is specified by two natural numbers, $k\neq l$, between 1 and $n$. The global beamsplitter then acts on all the modes with indices $\{m\}$ such that their $k$-th bit is equal to 0, by applying local beamsplitters between modes with indices $m,m'$ that only differ in their $l$-th bit.
\label{def-ap:interferometer}
\end{definition}

\begin{problem}
\label{pbm-ap:interferometer} 
    Consider a bit-structured interferometer (see \Cref{def:interferometer}) acting on $2^n$ modes with $L\in\OC(\poly(n))$, 
    and an input state such that the first mode is displaced in position by a real constant $x$ while the state of the remaining modes is the vacuum. Then, decide whether the expectation value of the position on the first mode at the output of the interferometer is
    \begin{align}
        1.\,\langle\opq_1\rangle > \frac{2}{3} x\,,\qquad {\rm or} \qquad 2.\, \langle\opq_1\rangle < \frac{1}{3} x\,,\nonumber
    \end{align}
    given the promise that either one or the other is true. 
\end{problem}

We prove our main result by showing that \Cref{pbm-ap:interferometer} reduces to a BQP-complete problem and vice-versa.
\begin{theorem}
     \Cref{pbm-ap:interferometer} is BQP-complete.
\end{theorem}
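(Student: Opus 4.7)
The plan is to establish BQP-completeness by separately proving containment in BQP and BQP-hardness, leveraging the dictionary between GB gates and qubit circuits developed earlier in the paper.

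For BQP-containment, I would first observe that the specified input state, being vacuum on all modes except a displacement of the first mode by $x$, has expectation-value vector $\la \z\ra$ with a single non-zero entry $\la \opq_1\ra = x$. Under the encoding of Eq.~\eqref{eq:encoding}, this corresponds to the computational-basis state $\ket{0}\otimes\ket{0}^{\otimes n}$, which is trivially preparable. Since each of the $L\in\OC(\poly(n))$ global beamsplitters in a bit-structured interferometer translates, via \Cref{app:polytuni}, into a single 0-controlled $R_y$ acting between the bit $l$ qubit and the bit $k$ qubit (no multi-qubit control is needed because the bit-structure condition constrains exactly one register bit), the full interferometer compiles into a qubit circuit of size $\OC(\poly(n))$. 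The quantity $\la\opq_1\ra/x$ at the output then equals the overlap of the evolved state with $\ket{0}\otimes\ket{0}^{\otimes n}$, which is estimable to the required precision in polynomial time via a Hadamard test; the $1/3$--$2/3$ promise gap matches the standard BQP gap, so the problem lies in BQP.

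For BQP-hardness, I would reduce an arbitrary BQP computation to \Cref{pbm-ap:interferometer}. By the universality result of Rudolph--Grover~\cite{rudolph20022} (also~\cite{shi2002both,aharonov2003simple}), every polynomial-size quantum circuit can be compiled, with only polynomial overhead, into a circuit composed of 0-controlled $R_y$ rotations acting on pairs of qubits. The key step is then to show that each such gate is implementable by a single global bit-structured beamsplitter. Concretely, a global beamsplitter specified by indices $(k,l)$ with rotation angle $t$ restricts to modes whose $k$-th bit is $0$ and pairs them by flipping the $l$-th bit; by the analysis of \Cref{app:polytuni}, this acts on the $(n+1)$-qubit register exactly as a $0$-controlled $R_y(4t)$ with control on register qubit $k$ and target on register qubit $l$, and identity on the symplectic qubit. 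Composing $L$ such beamsplitters therefore realises, on the subspace where the symplectic qubit is $\ket{0}$, any polynomial-size controlled-$R_y$ circuit. Since the input encodes $\ket{0}\otimes\ket{0}^{\otimes n}$ and the beamsplitters leave the symplectic qubit invariant, the computation stays in that subspace throughout.

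It remains to connect the BQP acceptance probability to $\la\opq_1\ra/x$. I would arrange the compiled circuit so that the decision bit of the BQP problem is the value of register qubit $n$ (the most significant bit), i.e.\ the BQP acceptance probability equals $|\bra{0}^{\otimes n}U_{\text{BQP}}\ket{0}^{\otimes n}|^2$ up to the standard overhead, with $\ket{0}^{\otimes n}$ being the all-zeros register state in our encoding. Because $\la \opq_1\ra/x$ is precisely the overlap of the evolved qubit state with $\ket{0}\otimes\ket{0}^{\otimes n}$, the two gap conditions $>2/3$ and $<1/3$ translate directly into YES/NO instances of the BQP problem. The main obstacle I anticipate is verifying that the Rudolph--Grover compilation, which is stated for 1-controlled $R_y$'s, yields a circuit whose 0-controlled variant (forced by \Cref{def-ap:interferometer}) preserves the acceptance gap without more than polynomial blow-up; this is resolved by conjugating with $X$-gates, which themselves arise from suitable beamsplitters acting as NOT on a single register qubit (or by absorbing the $X$'s into the initial preparation and final measurement). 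With this adjustment the reduction is polynomial-time and preserves the promise gap, completing the proof.
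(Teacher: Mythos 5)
Your proposal is correct and follows essentially the same route as the paper: containment via the dictionary that compiles each global bit-structured beamsplitter into a single $0$-controlled $R_y$ (yielding a $\poly(n)$-size circuit whose overlap with $\ket{0}^{\otimes n+1}$ equals $\langle\opq_1\rangle/x$), and hardness via the universality of $0$-controlled $R_y$ rotations from Ref.~\cite{rudolph20022}. The only difference is one of packaging: where you invoke the Rudolph--Grover result as a black box and patch the $1$-control versus $0$-control mismatch by $X$-conjugation, the paper proves a self-contained lemma directly for $0$-controlled $R_y$ gates acting on $\ket{0}^{\otimes n+2}$ (encoding complex amplitudes into real ones with an ancilla and realizing $R_z$, $R_y$ and the two-qubit $F(\pi/2)$ gate one by one), which sidesteps the need to synthesize uncontrolled $X$ gates out of bit-structured beamsplitters.
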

\begin{proof}
    
We recall the following problem, known to be BQP-complete.
\begin{problem}
    \label{pbm-ap:bqp-complete}
    Given a uniform family of quantum circuits on $n$ qubits with $J\in\OC(\poly(n))$ local gates $\{U_j\}_{1\leq j \leq J}$ taken from a universal gate set $\mathcal{S}$, which are applied to the state $\ket{0}^{\otimes n}$ to produce $\ket{\psi}=\prod_j U_j \ket{0}^{\otimes n}$, decide whether 
    \begin{align}\nonumber
        1. \,\ket{\psi} \text{ has an  overlap larger than 2/3 with } \ket{0}^{\otimes n} \qquad {\rm or} \qquad 
        2.\, \ket{\psi} \text{ has an overlap smaller than 1/3 with } \ket{0}^{\otimes n}\,,
    \end{align}
    given the promise that either one or the other is true.
\end{problem}

    \subsection*{Inclusion in BQP}
    First, we prove that \Cref{pbm-ap:interferometer} is in BQP by showing that \Cref{pbm-ap:interferometer} can be efficiently reduced to \Cref{pbm-ap:bqp-complete}. We are given access to an algorithm to solve \Cref{pbm-ap:bqp-complete} and a bit-structured interferometer composed of polynomially many layers of global beamsplitters. The initial state in \Cref{pbm-ap:interferometer} is a tensor product of coherent states such that $\la \opq_m \ra= x \delta_{m=1}$ and $\la \opp_m \ra = 0$. We consider a quantum circuit over $n+1$ qubits, composed of the symplectic qubit and the $n$ register qubits, as explained in the main text. The initial coherent state then corresponds to an input state $(1,0,\dots,0)^T = \ket{0}^{\otimes n+1}$ in the qubit picture. 

    We recall that a uniform family of quantum circuits is a set of circuits $\{C_n\}$ such that a classical Turing machine can produce a description of $C_n$ on input $n$ in time polynomial in $n$. In our case, the classical description of the beamsplitter gates is a pair of natural numbers smaller or equal than $n$ for a problem of size $n$. As such, this description can be efficiently translated to a circuit description using the dictionary we provided in \Cref{app:gates}, as we know that a single layer of global beamsplitters can be mapped to a 0-controlled-$R_y$ gate on the register (see \Cref{def-ap:interferometer}). Therefore, we can construct a uniform family of quantum circuits implementing the action in phase space of bit-structured interferometers over $2^n$ modes. 
    
    We use access to the solver of \Cref{pbm-ap:bqp-complete} with the polynomial-size sequence of 0-controlled-$R_y$ gates corresponding to the global beamsplitters to determine whether the output state has an overlap with $\ket{0}^{\otimes n+1}$ that is $>2/3$ or $<1/3$. This directly answers the question of whether the final coherent state has a position expectation value for the first mode $>2x/3$ or $<x/3$. We have therefore proved that \Cref{pbm-ap:interferometer} reduces to \Cref{pbm-ap:bqp-complete}, implying that it is in BQP. 
    
    As a side note, in \Cref{app:gates} we show that a broader class of particle-preserving gates can be simulated efficiently by a quantum computer. Indeed we have mapped each local and bit-structured global interferometric gate to a constant number of multi-controlled qubit gates. Each of the multi-controlled gates can be decomposed into $\OC(n)$ two-qubit gates. Therefore any interferometer made of a polynomial number of local or bit-structured global gates over $2^n$ modes can be simulated efficiently by a polynomial-depth circuit acting on $(n+1)$ qubits.

    \subsection*{BQP hardness}
    Second, we prove that \Cref{pbm-ap:interferometer} is BQP-Hard. To do so, we show that \Cref{pbm-ap:bqp-complete} efficiently reduces to \Cref{pbm-ap:interferometer}. As for the inclusion proof, the reduction is based on the fact that the beamsplitters composing a structured interferometer as in \Cref{def-ap:interferometer} translate to 0-controlled-$R_y$ rotations between all pairs of register qubits (as proven in \Cref{app:gates}). The key point for the hardness is that  0-controlled-$R_y$ gates constitute a universal gate set for quantum computation, as stated in the following Lemma (which is a restatement of a result in~\cite{rudolph20022}).
    \begin{lemma}
        The set of 0-controlled-$R_y(\theta)$ rotation gates with control qubit $k$ and target qubit $l$, with $1\leq k\neq l\leq n+2$ and $\theta \in [0,2\pi]$, applied to the initial state $\ket{0}^{\otimes n +2}$, is universal for quantum computation on $n$ qubits.
    \end{lemma}
    \begin{proof}

        Let us suppose that we have an $n$-qubit quantum state 
        \begin{equation} \label{eq-ap:complex-state}
            \ket{\psi} = \sum_{r=0}^{2^n-1} a_r e^{i\theta_r} \ket{r}\,,
        \end{equation}
        where the $a_r$ and $\theta_r$ are real numbers,
        together with the following universal gate set,
        \begin{equation} \label{eq-ap:universal-set}
            R_z(\tau) = \begin{pmatrix}
                e^{i\tau} & 0 \\ 
                0 & 1
            \end{pmatrix} \,, \qquad R_y(\tau)  =\begin{pmatrix}
                \cos (\tau/2) & -\sin (\tau/2) \\ \sin (\tau/2) & \cos (\tau/2)
            \end{pmatrix}\,, \qquad F\left(\frac{\pi}{2}\right)=\begin{pmatrix} \  0 & -1 & 0 & 0\\  1 & 0 & 0 & 0 \\  0 & 0 & 1 & 0 \\  0 &0& 0 & 1  \               
            \end{pmatrix} \,,
        \end{equation}
        where our $R_z(\tau)$ is equivalent to the standard one, as they only differ by a global phase and a relabeling $\tau\rightarrow-\tau$. Moreover, our $F(\pi/2)$ gate can be readily mapped to that in the universal set from Ref.~\cite{rudolph20022}, by using single-qubit $X$ gates. Since $R_z(\tau)$ and $R_y(\tau)$ can generate any single-qubit gate, both gate sets are equivalent and hence universal.
        
        Let us furthermore suppose that we have an $(n+1)$-qubit quantum state with real amplitudes,
        \begin{equation} \label{eq-ap:real-state}
            \ket{\phi} = \sum_{r=0}^{2^n-1} a_r \cos \theta_r \ket{r} \ket{0} + a_r \sin \theta_r \ket{r}\ket{1}\,.
        \end{equation}
        Clearly, the states in Eqs.~\eqref{eq-ap:complex-state} and~\eqref{eq-ap:real-state} contain the same information. We refer to the extra qubit in $\ket{\phi}$ as the ancilla.
        The action of the universal gates~\eqref{eq-ap:universal-set} on $\ket{\psi}$, such that $\ket{\psi}\rightarrow\ket{\psi'}$, will then induce an action $\ket{\phi}\rightarrow\ket{\phi'}$. We need to show that this induced action can be efficiently implemented using controlled-$R_y$ rotations that act non-trivially when the control qubit is in the $\ket{0}$ state.

        We begin with the $R_z$ gate, whose action on a single-qubit state is
        \begin{equation}
            R_z (r_0 e^{i\theta_0} \ket{0} + r_1 e^{i\theta_1}\ket{1}) =  r_0 e^{i(\theta_0+\tau)} \ket{0} + r_1 e^{i\theta_1}\ket{1}\,.
        \end{equation}
        The induced evolution is
        \begin{align}
            r_0 \cos \theta_0 \ket{0}\ket{0} + r_0 \sin\theta_0 \ket{0} \ket{1} &+ r_1\cos\theta_1\ket{1}\ket{0} +  r_1\sin\theta_1\ket{1}\ket{1} \nonumber \\ &\downarrow \nonumber \\  r_0 \cos (\theta_0+\tau) \ket{0}\ket{0} + r_0 \sin (\theta_0+\tau) \ket{0} \ket{1} &+ r_1\cos\theta_1 \ket{1}\ket{0} +  r_1\sin\theta_1\ket{1}\ket{1}\,.
        \end{align}
        This can be achieved by performing a 0-controlled-$R_y(\tau)$ gate where the control is the first qubit (i.e., the qubit on which $R_z(\tau)$ would act) and the target is the ancilla.

        Next, let us look at the action of $R_y(\tau)$. To implement this gate, we simply need an additional auxiliary qubit in the $\ket{0}$ state and to apply a 0-controlled-$R_y$ gate conditioned on this extra qubit.

        Finally, we have the $F\left(\frac{\pi}{2}\right)$ gate, whose action on a two-qubit state is given by
        \begin{align}
            r_0 e^{i\theta_0} \ket{00} + r_1 e^{i\theta_1} \ket{01} &+ r_2 e^{i\theta_2} \ket{10} +r_3 e^{i\theta_3} \ket{11}  \nonumber \\ &\downarrow \nonumber \\ r_1 e^{i\theta_1} \ket{00} - r_0 e^{i\theta_0} \ket{01} &+ r_2 e^{i\theta_2} \ket{10} + r_3 e^{i\theta_3} \ket{11}\,.
        \end{align}
        The corresponding induced action is 
        \begin{align}
            r_0 \cos\theta_0 \ket{000} + r_0 \sin\theta_0 \ket{001}  &+ r_1 \cos \theta_1 \ket{010}  + r_1 \sin \theta_1 \ket{011}+ \nonumber \\   r_2 \cos \theta_2 \ket{100} + r_2 \sin \theta_2 \ket{101}  &+ r_3 \cos\theta_3 \ket{110} + r_3 \sin\theta_3 \ket{111}
            \nonumber \\ &\downarrow\nonumber \\   r_1 \cos\theta_1 \ket{000} + r_1 \sin\theta_1 \ket{001}  &- r_0 \cos \theta_0 \ket{010}  - r_0 \sin \theta_0 \ket{011}+ \nonumber \\   r_2 \cos \theta_2 \ket{100} + r_2 \sin \theta_2 \ket{101} & + r_3 \cos\theta_3 \ket{110} + r_3 \sin\theta_3 \ket{111} \,.
        \end{align}
        The previous can be achieved by simply applying a 0-controlled-$R_y(\pi)$ in the first two qubits (i.e. the ancilla is not necessary).
        Therefore, we conclude that the set of $0$ is universal for quantum computation, given the initial state $\ket{0}^{\otimes n+2}$.
        \end{proof}

    We are given a circuit composed of a polynomial number of  0-controlled-$R_y$ gates. We are also given access to a solver for \Cref{pbm-ap:interferometer}. We add a qubit on top of the given circuit which is acted trivially upon, and consider it as the symplectic qubit. We query the given solver with the sequence of global bit-structured beamsplitters corresponding to the sequence of 0-controlled-$R_y$ gates as input. Similarly to the inclusion proof, because the input state is the all-zero state, the reduction directly follows. We conclude that \Cref{pbm-ap:interferometer} is BQP-complete.
\end{proof}

\section{From unitary quantum circuits to interferometers}
\subsection{Separating real and imaginary parts of the amplitudes of a quantum state}

\label{app:unit2symplectic}
Consider a unitary quantum circuit on $n$ qubits. Such a circuit is applied to a complex state on $n$ qubits of the following form
\begin{equation}
    \ket{\psi} = \sum_{r=0}^{2^n-1} (a_r + b_r i) \ket{r}\,.
\end{equation}
Adding one qubit (as the left-most in the tensor product) which we call the symplectic qubit for reasons that will become clear later, we can define the real-valued state over $n+1$ qubits. 
\begin{equation}
    \ket{\phi} = \sum_{r=0}^{2^n-1} a_r\ket{0}\ket{r} + b_r\ket{1}\ket{r}\,.
\end{equation}

First, we recall that a set of universal one-qubit gates, together with any entangling gate forms a universal set. We can use $Rz$ and $Ry$ gates to generate any $Rx$ gate we wish, and thus $Rz$ and $Ry$ form a universal gate set for unitaries. Therefore together with $CRy$, they form a universal gate set for unitaries on $n$ qubits. This yields the following lemma.
\begin{lemma}
    The set of gates $\{Rz,Ry,CRy\}$ is universal.
\end{lemma}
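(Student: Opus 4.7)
The plan is to invoke the standard two-step characterization of universality: (a) show that $\{R_z,R_y\}$ already generates every single-qubit unitary up to a global phase, and (b) observe that $CR_y$ is a genuine two-qubit entangler. The folklore result (see, e.g., the Brylinski--Brylinski theorem, or Chapter 4 of Nielsen and Chuang) that one entangling two-qubit gate together with arbitrary single-qubit gates densely generates $SU(2^n)$ then yields the claim.

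For step (a), I would use the Euler-angle decomposition of $SU(2)$: every single-qubit unitary may be written, up to a global phase, as $R_y(\alpha)\,R_z(\beta)\,R_y(\gamma)$ for suitable real $\alpha,\beta,\gamma$. In particular, the identity $R_x(\theta) = R_y(-\pi/2)\,R_z(\theta)\,R_y(\pi/2)$ implements arbitrary $x$-rotations, matching the hand-wave preceding the lemma. The convention mismatch between the paper's $R_z$ (with diagonal $(e^{i\tau},1)$, cf.~Eq.~\eqref{eq-ap:universal-set}) and the textbook $z$-rotation amounts only to a global phase and a sign flip of the argument, so the projectively generated group is unaffected.

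For step (b), it suffices to remark that $CR_y(\theta)$ with $\theta\notin\pi\mathbb{Z}$ is entangling: applied to $|+\rangle|0\rangle$ it yields $(|0\rangle|0\rangle + |1\rangle R_y(\theta)|0\rangle)/\sqrt{2}$, whose reduced density matrix on either qubit has strictly positive von Neumann entropy. The special choice $CR_y(\pi)$ is moreover locally equivalent to $CZ$ via a Hadamard conjugation of the target, giving a very concrete entangler to feed into the universality theorem.

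The main (and essentially only) obstacle is bookkeeping: ensuring the version of the universality theorem one cites is stated as ``single-qubit universal set $+$ any entangling two-qubit gate $\Rightarrow$ universal on $n$ qubits'', rather than the narrower ``$+\,CX$'' form. Both formulations are standard and equivalent, since the KAK/Cartan decomposition allows one to sandwich any entangler between single-qubit gates to produce $CX$. No Solovay--Kitaev-type approximation argument is needed because the gate set already contains continuous parameters, so exact universality (rather than merely density) is obtained directly from the Euler decomposition.
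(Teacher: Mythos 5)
Your proposal is correct and follows essentially the same route as the paper, which likewise argues that $R_z$ and $R_y$ generate all single-qubit unitaries and then invokes the standard fact that a universal single-qubit set plus an entangling two-qubit gate is universal. You simply fill in details the paper leaves implicit (the Euler decomposition, an explicit check that $CR_y(\theta)$ is entangling, and the convention bookkeeping for $R_z$).
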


 We are going to prove that this universal set of gates $\{Ry, Rz, CRy\}$ for unitary circuits can be translated to a specific set of orthogonal gates on $n+1$ qubits. It is easy to see that for any real gate, such as $Ry$, the gate is simply applied to the register.
\begin{equation}
    Ry(\tau) = \begin{pmatrix}
    \cos(\tau/2) & -\sin(\tau/2)\\
    \sin(\tau/2) & \cos(\tau/2)
    \end{pmatrix}
    \rightarrow
    \id \otimes Ry(\tau/2) = \begin{pmatrix}
    \cos(\tau/2) & -\sin(\tau/2)& 0 & 0\\
    \sin(\tau/2) & \cos(\tau/2) &0&0\\
    0&0&\cos(\tau/2) & -\sin(\tau/2)\\
    0&0&\sin(\tau/2) & \cos(\tau/2)
    \end{pmatrix}
    =\exp(-i\tau\id\otimes Y/2).
\end{equation}
This is also true for controlled-$Ry$, which yields the same controlled-$Ry$ on the register
\begin{equation}
    C_k{Ry}_l(\tau/2) 
    \rightarrow
    \id \otimes C_k{Ry}_l(\tau/2) 
    =\exp(-i\tau\id \otimes \ketbra{1}{1}_k Y_l/2).
\end{equation}
For complex gates, the symplectic qubit is involved in the corresponding orthogonal gate. We show a derivation for $Rz$ below,
\small
\begin{equation}
    Rz(\tau/2) \ket{a+ic,b+id}= 
    \begin{pmatrix}
    1 & 0\\
    0 & \cos(\tau/2) + i \sin(\tau/2)
    \end{pmatrix}
    \begin{pmatrix}
    a + ic\\
    b + i d
    \end{pmatrix}
    =
    \begin{pmatrix}
    a + ic\\
    (\cos(\tau/2) b + \sin(\tau/2) d) + i (-\cos(\tau/2) d + \sin(\tau/2) b)
    \end{pmatrix}\,.
\end{equation}
\normalsize
Finally we conclude that $Rz(\tau/2)\rightarrow C_1RY_0(\tau/2) = \exp(-i\tau Y\otimes \ketbra{k}{k}/2)$
\begin{equation}
    C_1RY_0(\tau/2) \ket{a,b,c,d} \rightarrow
    \begin{pmatrix}
    1 & 0 & 0 & 0\\
    0 & \cos(\tau/2) & 0 & -\sin(\tau/2)\\
    0&0& 1 & 0\\
    0&\sin(\tau/2)&0 & \cos(\tau/2)
    \end{pmatrix}
    \begin{pmatrix}
    a \\
    b \\
    c \\
    d
    \end{pmatrix}.
\end{equation}
We take note that for all of the gates in a universal set, we have derived an orthogonal gate that belongs to the unitary symplectic algebra as in \Cref{app:symplectic}, with the generators of $\id \otimes Ry \in \{i \id \otimes P_a\}$, those of  $\id \otimes CRy \in\{ i \id \otimes P_a\}$ and those of $Rz \rightarrow C_k{Ry}_0 \in \{iY \otimes P_s\}$.

\subsection{Unitary gates to global bit-structured interferometric gates} 

We now show that each of the gates derived in the previous section maps in turn to a global bit-structured interferometric gate, as follows. 
\begin{itemize}
    \item An $Rz$ gate on qubit $k$ gate is mapped to a $C_kRy_0$ gate in the symplectic picture. Its corresponding GB gate is the phase gate on half of the modes, whose $k$-th bit of the index is 1. Using notation from \Cref{app:gates} it is $\text{P}((k,1))$.
    \item An $Ry$ gate on qubit $k$ gate is mapped to a $Ry_k$ gate in the symplectic picture. Its corresponding GB gate is the beamsplitter where all modes are paired such that their index differs only by their $k$-th qubit. Using notation from \Cref{app:gates} it is $\text{BS}(\varnothing,k)$.
    \item A $CRy$ gate controlled on qubit $l$ and target on qubit $k$ gate is mapped to a $C_lRy_k$ gate in the symplectic picture. Its corresponding GB gate is the beamsplitter where all the modes whose $l$-th bit of the index is $1$ are paired such that their indices only differ by their $k$-th qubit. Using notation from \Cref{app:gates} it is $\text{BS}((l,1),k)$.
\end{itemize}

We summarize the previous equivalences in the below table.
\begin{table}[h]
    \centering
\begin{tabular}{ |c |c |c |}
\hline
Unitary gate & Symplectic gate & GB gate\\
\hline
 $Rz_k$ & $C_kRy_0$ & $\text{P}((k,1))$ \\
 \hline
 $Ry_k$ & $\id \otimes Ry_k$ & $\text{BS}(\varnothing,k)$ \\ 
 \hline
 $C_lRy_k$ & $\id \otimes C_lRy_k$ & $\text{BS}((l,1),k)$ \\ 
 \hline
\end{tabular}
\caption{Mapping between unitary, symplectic and bosonic gates.}
    \label{tab:my_label}
\end{table}

\subsection{From unitary circuits to interferometers}
In this subsection, we show how a unitary qubit computation can be mapped to the evolution by an interferometer of the first moments of expectation values of quadrature operators of coherent states over exponentially many modes. 

\textbf{Gates.} Based on the derivations above, given a unitary on $n$ qubits composed of $L$ gates from the universal set $\{Rz,Ry,CRy\}$, we can map it to an equivalent interferometer on $2^n$ modes composed of exactly $L$ global bit-structured interferometric gates (global bit-structured beamsplitters and global bit-structured phase gates). In that picture, the $k$-th mode tracks the amplitude of the qubit state on the computational-basis state $\ket{m}$ with the position as the real part and the momentum as the imaginary part. 

\textbf{State preparation.} To prepare a sparse qubit state, we start from the vacuum, and each non-zero entry $a+ib$ is position displaced by $a$ and momentum displaced by $b$ for the corresponding mode. Note we have a degree of freedom to upload a state $\ket{\psi}$ onto our modes up to a multiplicative coefficient. For example the state $((1+i) \ket{0} - \sqrt{2} \ket{1})/2$ can be prepared as $q_0=1, p_0=1, q_1=\sqrt{2}$ but also as $q_0=10, p_0=10, q_1=10\sqrt{2}$. We use the expectation value of the sum of the number operator for each mode $\langle\hat{n}_m\rangle= \frac{1}{2}\langle \sum_m \opq_m^2 + \opp_m^2 \rangle$ to characterize this degree of freedom when encoding qubit states into bosonic states. This expectation value also corresponds to the number of photons $P=\sum_m \langle\hat{n}_m\rangle$ in the circuit.  We recall that coherent states are eigenstates of the anihilation operator $\opa\ket{\alpha} = \alpha \ket{\alpha}$, therefore $\langle \hat{n} \rangle = \langle \opb\opa \rangle = \lvert \alpha \rvert^2 = \frac{1}{2} \left( \la \opq\ra^2 + \la \opp \ra^2 \right)$.

\textbf{Measurements.} We consider the photon counting measurement and show that it corresponds to sampling bitstrings from the qubit circuit. The probability of detecting $p$ photons on the $m$-th mode is a Poissonian distribution $\Pr(p) = \exp(-e_m)e_m^{p}/p! $ with an average equal to the energy of the mode $e_m = \langle \hat{n}_m \rangle$. We recall that the Poissonian distribution expresses the probability of a given number of events occurring when these events occur independently at a known constant mean rate, which is in that case $e_m$. Therefore considering that a number of photons $P = \sum_m \la \hat{n}_m \ra$ has been injected at the beginning of the interferometer we should get $P$ photons at the output, distributed according to a compound Poissonian distribution where each mode has a rate of occurrence $e_m$. Effectively we are getting $P$ bitstring samples according to the distribution $[e_0, e_1, \cdots, e_m]$. Recalling that $e_m = \frac{1}{2} \left( \la \opq\ra^2 + \la \opp \ra^2 \right)$, which is effectively proportional to the probability of sampling the bitstring $m$ at the output of the qubit circuit. The more energy injected at the beginning of the circuit the more samples we get.

Now we consider homodyne detection which measures in the basis of $\opp$, $\opq$ or any combination of the two $\hat{x} = \cos{\theta}\opq + \sin{\theta} \opp$. This yields a Gaussian distribution centered around $\langle \hat{x} \rangle$, and for coherent states, with variance $1/2$. Effectively this is equivalent to doing a Hadamard test to access either the real part or the imaginary part of the amplitude of a state on the computational basis, which is affected by shot noise. Increasing the energy at the input of the interferometers increases the precision with which we can measure the real and imaginary parts of the amplitude on $\ket{m}$.

\end{document}